\newtheorem{definition}{Definition}
\newtheorem{remark}{Remark}
\newtheorem{proposition}{Proposition}
\newtheorem{lemma}{Lemma}
\newtheorem{proof}{Proof}
\begin{document}
\selectlanguage{english}
\title[Exact percolation probabilities for a square lattice]{Exact percolation probabilities for a square lattice: Site percolation on a plane, cylinder, and torus}

\author{R~K~Akhunzhanov, A~V~Eserkepov, Y~Y~Tarasevich}
\ead{tarasevich@asu.edu.ru}
\address{Laboratory of Mathematical Modeling, Astrakhan State University, Astrakhan 414056, Russia}

\vspace{10pt}
\begin{indented}
\item[]December 2021
\end{indented}

\begin{abstract}
We have found analytical expressions (polynomials) of the percolation probability for site percolation on a square lattice of size $L \times L$ sites when considering a plane (the crossing probability in a given direction), a cylinder (spanning probability), and a torus (wrapping probability along one direction). Since some polynomials are extremely cumbersome, they are presented as separate files in Supplemental material. The system sizes for which this was feasible varied  up to $L=17$ for a plane, up to $L=16$ for a cylinder, and up to $L=12$ for a torus. To obtain a percolation probability polynomial, all possible combinations of occupied and empty sites have to be taken into account. However, using dynamic programming along with some ideas related to the topology, we offer an algorithm which allows a significant reduction in the number of configurations requiring consideration. A rigorous formal description of the algorithm is presented. Divisibility properties of the polynomials have been rigorously proved. Reliability of the polynomials obtained have been confirmed by the divisibility tests. The wrapping probability polynomials on a torus provide a better estimate of the percolation threshold than that from the spanning probability polynomials. Surprisingly, even a naive finite size scaling analysis allows an estimate  to be obtained of the percolation threshold $p_c = 0.59269$.
\end{abstract}

%
\vspace{2pc}
\noindent{\it Keywords}: percolation probability, finite-size scaling, percolation threshold, square lattice, site percolation\\
%
\submitto{\jpa}
%
%
%

\section{Introduction\label{sec:intro}}
Site or bond percolation on graphs including Archimedean lattices~\cite{Suding1999,Scullard2020PRR} and networks~\cite{Cohen2009,Li2021} are some of the most extensively studied problems in statistical physics. Since bond percolation can be treated as site percolation on an edge-to-vertex dual graph, hereinafter we will refer only to the site percolation. Each site of the graph is either occupied with probability $p$ or is empty with a probability $q = 1- p$.

Let $R_L(p)$ be the probability that a finite system of size $L$ percolates at an occupation probability $p$, i.e., the probability that the percolation cluster occurs. When $L \to \infty$,  $R_L(p)$ tends to a step function
\begin{equation}\label{eq:PT}
  R(p) =
\left\{
  \begin{array}{lll}
    0, & \mathrm{if} & p < p_c, \\
    R^\ast, & \mathrm{if} & p = p_c, \\
    1, & \mathrm{if} & p > p_c, \\
  \end{array}
  \right.
\end{equation}
where $p_c$ is the percolation threshold. The value of $R^\ast$ depends on the shape of the system under consideration, on the boundary conditions, and on the way how the percolation is defined. For example, on a torus [a square region with periodic boundary conditions (PBCs)], the occurrence of percolation may be defined when (i)~there exists a cluster that wraps the system along either the horizontal or vertical directions, or both [the percolation probability is denoted as $R_L^{(e)}(p)$]; (ii)~there exists a cluster that wraps the system around one specified direction but not the perpendicular direction [the corresponding percolation probability is denoted as $R_L^{(1)}(p)$]; (iii)~there exists a cluster that wraps the system around both the horizontal and vertical directions [$R_L^{(b)}(p)$]; (iv)~there exists a cluster that wraps the system around the horizontal (vertical) directions [$R_L^{(h)}(p)$ and $R_L^{(v)}(p)$, respectively]~\cite{Newman2000}.

Using conformal field theory, Cardy found the crossing probabilities between segments of the boundary of a compact two-dimensional region at the percolation threshold~\cite{Cardy1992}.
According to Cardy, at the percolation threshold $p = p_c$, the crossing probability $R$ is a function of the rectangle’s aspect ratio $A$
\begin{equation}\label{eq:Cardy}
  R(p_c,A) = \frac{3\Gamma(2/3)}{\Gamma(1/3)^2} m^{1/3} {}_2F_1\left(\left.\frac{1}{3},\frac{2}{3};\frac{4}{3}\right|m \right)
\end{equation}
where ${}_2F_1$ is the hypergeometric function, $m$ is connected with the aspect ratio as
\begin{equation}\label{eq:CardyA}
A = \frac{K(m)}{K(1-m)},
\end{equation}
and $K(m)$ is the elliptic function of the first kind. In particular, when $A=1$ (a square region), obviously, $m = 1/2$ and $R(p_c,1) = 1/2$.

This result has been extended and confirmed by means of computer simulations~\cite{Watts1996,Simmons2007}.  Using conformal field theory, crossing probabilities have also been obtained both for torus~\cite{Pinson1994} and polygonal shapes~\cite{Flores2017}. The compressed result and the values of these probabilities on a torus are presented  in Refs.~\cite{Newman2000,Newman2001} (see also~\cite{Mertens2012} for more accurate values of these probabilities).

Only for few lattices are the exact values of the percolation thresholds known~\cite{Wierman2021}. It is notable that, even for the square lattice, which is the simplest and most extensively studied sample, the exact value of the percolation threshold is known only for bond percolation but not for site percolation.  Several approaches are used to estimate the percolation thresholds in cases when their exact values are not known. First of all, the critical polynomials method should be mentioned~\cite{Scullard2012,Scullard2012b,Scullard2012a,Jacobsen2014,Scullard2020PRR,Xu2021}. In this way, the percolation thresholds have been determined for various lattices~\cite{Scullard2012,Scullard2012a,Scullard2012b,Yang2013,Yang2014,Jacobsen2015,Scullard2020PRR,Scullard2021,Xu2021}. The method is based on consideration of finite size systems. Exact results obtained for these systems can be extrapolated to the thermodynamic limit using finite-size scaling (FSS) analysis.
The graph polynomial method gives $p_c = 0.592\,746\, 01(2)$~\cite{Jacobsen2014} for the site percolation threshold on the square lattice. To simplify the computations, some ideas from game theory~\cite{Yang2013} and Temperley---Lieb algebra~\cite{Jacobsen2015} have been used.
Thus, currently, the most accurate value of the percolation threshold for site percolation on a square lattice $p_c = 0.592\,746\,050\,792\,10(2)$ has been obtained using this method~\cite{Jacobsen2015}. Since its derivation in 2015, this result has not yet been improved on, although it was confirmed by different method to be within the errorbar $p_c = 0.592\,746\,050\,792\,0(4)$~\cite{Mertens2021}. The transfer matrix formalism has been successfully used to obtain the percolation thresholds for a number of cases~\cite{Feng2008,Scullard2012b,Scullard2020PRR,Scullard2021}.

Another possible way to determine the percolation threshold is by calculation of the percolation polynomials, i.e., enumeration of all the possible distributions of occupied and empty sites in a  small system along with exact computation of the percolation probability for each value of $p$~\cite{Ziff1992,Mertens2021}. However, a complete enumeration of all configurations is possible only for very small systems, so this significantly limits the applicability of the method. Limitations arise due both to the time complexity of the problem and the need to store huge data sets, leading to the need for using approximation approaches for estimating the coefficients of the polynomial, for example by using Monte Carlo methods~\cite{Ziff2021}.

Since the number of configurations to be analysed grows as $2^{L^2}$, only small systems can be considered due to the exponential growth of the computations with any increase in the system size~\cite{Yang2013,Yang2014,Mertens2021}. Nevertheless, this method is extremely attractive due to the probably illusory hope of guessing a  regularity that would provide the possibility for constructing a recurrent formula to find the probabilities without directly enumerating all  of the configurations. Unfortunately, as presented in Ref.~\cite{Mertens2021}, detailed analysis of the required time and memory suggests that $L = 24$ is a difficult to overcome limit in the case of a square lattice on a plane.

Moreover, Monte Carlo simulation is extensively used to estimate percolation thresholds. In this method, an estimate of the percolation threshold, $p_c(L)$,  is obtained statistically for several system sizes, $L$. Then, the sequence of $p_c(L)$  is extrapolated to the thermodynamic limit. Notice, that this method has successfully been applied to both discrete~\cite{Newman2000,Newman2001} and continuous~\cite{Mertens2016} percolation problems. However, this method does also have some limitations. The precision of the estimate of the percolation threshold for any given system size depends on the number of independent runs. This statistical error cannot be completely eliminated. Reducing the statistical error is possible only by increasing the number of runs, but this number cannot be increased infinitely due to the  limitations of the reasonable simulation times. Indeed, increase in the system size leads to a significant increase in the required simulation time, while its effect on the extrapolation accuracy is only  modest. Although, over a number of decades, the Monte Carlo approach has played a crucial role in estimations of percolation thresholds, it appears that, currently, straight Monte Carlo has completely exhausted its possibilities~\cite{Ziff2021}. Even the combination of exact enumerations with Monte Carlo simulations~\cite{Yang2013} is hardly a lifebuoy.  Thus, combination of exact percolation  polynomials for small systems along with numerical estimations for larger systems gives $p_c = 0.592\,746\,050\,95(15)$, i.e.,  9 accurate digits~\cite{Yang2013}. Extensive review of the results and methods for the determination of $p_c$ for site percolation on a square lattice (including some not mentioned above) can be found in Ref.~\cite{Ziff2011}.

There are several different ways to estimate the percolation threshold. An excellent review of these methods can be found in Ref.~\cite{Ziff2002}. Only three of  these estimates will be used in the present study.
\begin{enumerate}
  \item The estimate $p^\ast$ corresponding to the point where $R_L(p)$ equals its universal value in the thermodynamic limit $R^\ast$~\cite{Ziff1992}
\begin{equation}\label{eq:estimator-ast}
R_L(p^\ast) = R^\ast.
\end{equation}
  \item The estimate $p^{infl}$ corresponding to the point where $R'_L(p)$ reaches a maximum (or equivalently, where $R_L(p)$ is at its inflection point)~\cite{Reynolds1980}
\begin{equation}\label{eq:estimator-infl}
R''_L(p^{infl}) = 0,
\end{equation}
here, the primes indicate differentiation with respect to $p$.
    \item The estimate $p^{cc}$ corresponding to the point where two systems of different size have the same value of $R$~\cite{Reynolds1980}
        \begin{equation}\label{eq:estimator-cc}
        R_{L1}(p^{cc}) = R_{L2}(p^{cc}),
        \end{equation}
        where $L1$ and $L2$ may be, e.g., $L$ and $L-1$ or $L$ and $L/2$.
\end{enumerate}

Since the percolation threshold is related to the thermodynamic limit, while both percolation polynomials and the Monte Carlo estimates can be obtained for finite-size systems, an FSS is needed. A well-known scaling relation says
\begin{equation}\label{eq:FSS}
  p_c(L) - p_c(\infty) \propto L^{-1/\nu},
\end{equation}
where $\nu$ is the critical exponent ($\nu = 4/3$ in 2D)~\cite{Stauffer2018}.

When the percolation threshold is estimated on a torus using $R^\ast$, one can significantly reduce the amount of computation needed, since the convergence to the thermodynamic limit of the percolation threshold values obtained for systems of finite size is high. Hence, this is the most efficient method to estimate the percolation threshold~\cite{Newman2000,Newman2001,Li2009,Mertens2012}, since
\begin{equation}\label{eq:scaling}
p_c(L) - p_c(\infty) \propto L^{-2-1/\nu}.
\end{equation}
Using FSS, the value of the percolation threshold in the thermodynamic limit can be obtained. In any case, this FSS requires the limit $L \to \infty$ which will hardly hold when percolation polynomials are considered. There have been numerous attempts to improve FSS. A widely used ansatz for the asymptotic behaviour of $p_c(L)$, motivated by the general principles of
FSS, is that of a series of power-law corrections
\begin{equation}\label{eq:FSSseries}
  p_c(L) = p_c(\infty) + \sum_{k=1}^{\infty}A_k L^{-\Delta_k},
\end{equation}
where all $\Delta_k > 0$ and $\Delta_k < \Delta_{k+1}$~\cite{Jacobsen2015,Mertens2016,Mertens2017,Mertens2021}.
In physics, the Bulirsch--Stoer algorithm~\cite{Bulirsch1964} is widely used for FSS~\cite{Henkel1988,Monroe2002} including for percolation~\cite{Mertens2021}.

Mostly, open border systems are used to calculate the percolation probabilities. Since the systems under consideration are small, the finite-size effect is significant. The effect of boundaries can be reduced by considering PBCs along one direction (percolation on a cylinder) or along both directions, i.e., percolation on a torus. The percolation polynomials on a torus have been computed by Mertens~\cite{MertensHomePage} up to $L=11$, unfortunately, however, they have not been published as an article. Moreover, for $L>7$, these polynomials fail the divisibility test. The use of PBCs seems to be very promising, since the effectiveness of this approach has repeatedly been proven in works where the percolation thresholds have been estimated by the Monte Carlo method~\cite{Newman2000,Newman2001}. The transfer of the approach to the case of finding the percolation threshold using percolation polynomials looks potentially fruitful.

The goal of the present work is the computation of percolation probabilities on a plane, on a cylinder, and on a torus. The rest of the paper is constructed as follows. \Sref{sec:methods} describes some technical details of the simulations. Proof of the divisibility property of the coefficients of the percolation polynomials is presented in \ref{subsec:DT}. \Sref{sec:results} presents our main findings. \Sref{sec:concl} summarizes the main results. The mathematical background of the algorithm is presented in \ref{sec:BG}. The obtained percolation polynomials are presented in the Supplementary material.

\section{Methods\label{sec:methods}}
\subsection{Common information}
We were looking for analytical expressions (polynomials) of percolation probability for site percolation on a square lattice of size $L \times L$ sites
$$
R_L^{(v)}(p) = \sum_{i=0}^{L^2} c_i p^i q^{L^2 - i}
$$
considering  a square region with different boundary conditions, viz., (i)~PBCs along both directions, i.e., a torus (wrapping probability along one direction), (ii)~PBC along one direction, i.e., a cylinder (spanning probability), and (iii)~open boundaries, i.e., a plane (spanning probability). The latter case was used as a test of our algorithm and software, since the percolation polynomials for this case have been previously published~\cite{Ziff2002,MertensHomePage}. The system size varied up to $L=12$ in the case of the torus, up to $L=16$ in the case of the cylinder, and up to $L=17$ in the case of the plane. According to the method presented in Ref.\cite{Mertens2021}, the low estimate of the number of configurations to be considered on torus $L \times L$ is a square of the number of those for a plane of the same size. The number of configurations for the torus $L=12$ approximately corresponds to those for the plane $L=21$, i.e., exceeds $10^9$.

To obtain a percolation probability polynomial, all possible combinations of occupied and empty sites were taken into account. We used an algorithm  based on dynamic programming along with some ideas from topology,  which allowed us significantly to reduce the number of configurations under consideration. Although the algorithm is close to those of other authors~\cite{Yang2013,Yang2014,Mertens2021}, an independent implementation of the algorithm was used. The mathematical background of the algorithm is presented in~\ref{sec:BG}. In fact, a formal description and justification of a family of algorithms is presented. This family of algorithms is applicable to a wide range of problems, where there are a finite number of objects each of which may be in the two states with probabilities $p$ and $q=1-p$, respectively. The description is based on the probability theory. The algorithms described in terms of transfer matrix belong to the same family of algorithms. However, our formal description needs no mention of transfer matrix, game theory or Temperley---Lieb algebra. This gives a new perspective on the problem and some freedom to apply the algorithm.

We used and compared all three estimators~\eref{eq:estimator-ast}, \eref{eq:estimator-infl}, and \eref{eq:estimator-cc}.

We  used the C++ bignum library~\cite{bignum}. All computations were implemented on a PC (Intel\textsuperscript{\circledR} Xeon\textsuperscript{\circledR} E5-2690 v3 CPU, with a CPU clock speed 2.6 GHz and 256 GB RAM).

For a plane $L \leqslant 11$, the spanning probability polynomials coincide with the previously published results~\cite{Ziff2002,MertensHomePage,Mertens2021}. As an additional test, the divisibilities of the coefficients of the obtained polynomial were used.

\subsection{Divisibility test}\label{subsec:DT}

Consider an event $A$ that is the percolation along the vertical direction in a rectangle $L_{1} \times L_{2}$. PBCs may be applied along one or both directions. Let $k:0\leqslant k\leqslant N$. Let $\Omega _{k}$ be an event where the status of the percolation state $S$ is positive and $\#_{1}( S)=k$. Let $G$ be a group of all the translations on this rectangle taken into account the PBCs. Consider an action of the group $G$ on the set $\Omega _{k}$. Due to the action of the group $G$, the set $\Omega _{k}$ splits over several non-intersecting orbits, $O_{S_{i}}$, of the situation $S_{i} ( 1\leqslant i\leqslant m)$: $\Omega _{k} =\bigsqcup _{i=1}^{m} O_{S_{i}}$, here $m$ is the number of orbits. Thus, $|\Omega _{k} |=\sum _{i=1}^{m} |O_{S_{i}} |$.

$G_{S_{i}}$ is a stabilizer of the situation $S_{i}$, i.e., a subgroup of the group $G$. 
According to the orbit-stabilizer theorem $|O_{S_{i}} |=\frac{|G|}{|G_{S_{i}} |}$.
In our case, $\#_{1}( S_{i}) = k$.

\begin{proposition}\label{prop:number}
$
\#_{1}( S_{i}) \vdots |G_{S_{i}} |.
$
\end{proposition}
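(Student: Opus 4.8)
The plan is to study how the stabilizer $G_{S_i}$ acts not on whole configurations but on the individual sites of the lattice. First I would note that every $g\in G_{S_i}$, being a translation that fixes the configuration $S_i$, must carry occupied sites to occupied sites; hence $G_{S_i}$ restricts to an action on the set $U$ of occupied sites of $S_i$, whose cardinality is $|U|=\#_1(S_i)$. The claim $\#_1(S_i)\vdots|G_{S_i}|$ then reduces to showing that this restricted action partitions $U$ into orbits all of the same size $|G_{S_i}|$.

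The key step is to establish that the action of $G$, and hence of any subgroup such as $G_{S_i}$, on the set of all sites is \emph{free}: no translation other than the identity fixes any site. On the torus this holds because $G$ is the group of translations of the $L_1L_2$ sites acting by its regular representation; a translation by $(a,b)$ sends a site $(x,y)$ to $(x+a,y+b)$ taken modulo the periods, and this equals $(x,y)$ only when $a\equiv 0$ and $b\equiv 0$. For a cylinder, where $G$ consists only of the translations along the periodic direction, the identical argument applies to the single periodic coordinate (and on a plane $G$ is trivial, so the statement is immediate). Consequently every nontrivial element of $G$ displaces every site, so in particular the stabilizer of any occupied site inside $G_{S_i}$ is trivial.

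With freeness established, I would apply the orbit--stabilizer theorem to the action of $G_{S_i}$ on $U$: for each $u\in U$ the orbit has size $|G_{S_i}|/|\mathrm{Stab}_{G_{S_i}}(u)| = |G_{S_i}|$, since the pointwise stabilizer is trivial. Writing $U$ as the disjoint union of these equal-sized orbits shows that $|U|=\#_1(S_i)$ is a multiple of $|G_{S_i}|$, which is exactly the proposition.

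The substantive part is the freeness argument; the rest is routine orbit counting. I expect the only points needing care are to verify that the relevant translation group genuinely acts without fixed points under each boundary condition (distinguishing the torus from the cylinder), and to confirm that $U$ is $G_{S_i}$-invariant, which is immediate from $g\cdot S_i=S_i$.
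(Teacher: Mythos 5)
Your proof is correct. The paper states Proposition~\ref{prop:number} without giving an explicit proof, and your argument --- that every translation group under consideration (torus, cylinder, or trivially the plane) acts \emph{freely} on the sites, so the stabilizer $G_{S_{i}}$ acts freely on the set of occupied sites of $S_{i}$, which therefore splits into orbits all of size $|G_{S_{i}}|$, forcing $\#_{1}(S_{i})$ to be a multiple of $|G_{S_{i}}|$ --- is precisely the standard justification the authors implicitly rely on, with the two points needing care (freeness under each boundary condition, and $G_{S_{i}}$-invariance of the occupied set) correctly identified and verified.
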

Here, $N \vdots M$ means that the integer $N$ is divisible by the integer $M$.

\begin{proposition}\label{prop:Lagrange}
According to Lagrange's theorem, $|G|\vdots |G_{S_{i}} |$.
\end{proposition}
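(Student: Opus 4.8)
The plan is to treat this as a direct application of Lagrange's theorem, so the substance of the argument lies in verifying that its hypotheses hold. First I would note that $G$ is finite: if a direction carries a PBC of length $n$, the admissible translations along it form a cyclic group of order $n$, and $G$ is a direct product of such cyclic factors (one per periodic direction), hence finite. Next I would confirm that the stabilizer $G_{S_{i}} = \{\, g \in G : g \cdot S_{i} = S_{i} \,\}$ is indeed a subgroup of $G$: it contains the identity, and it is closed under composition and under inversion, since $g \cdot S_{i} = S_{i}$ and $h \cdot S_{i} = S_{i}$ imply $(gh) \cdot S_{i} = S_{i}$ and $g^{-1} \cdot S_{i} = S_{i}$. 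Thus $G_{S_{i}}$ is a finite subgroup of the finite group $G$.

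Once these facts are in place, Lagrange's theorem applies verbatim and yields $|G| \vdots |G_{S_{i}}|$. If a self-contained derivation is preferred to a bare citation, I would partition $G$ into the left cosets $g G_{S_{i}}$; these cosets are pairwise disjoint and cover $G$, and left multiplication by $g' g^{-1}$ carries $g G_{S_{i}}$ bijectively onto $g' G_{S_{i}}$, so every coset has exactly $|G_{S_{i}}|$ elements. Summing over the $[G : G_{S_{i}}]$ distinct cosets gives $|G| = [G : G_{S_{i}}]\,|G_{S_{i}}|$, which is precisely the claimed divisibility.

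The main obstacle here is essentially bookkeeping rather than mathematics: one must set up the action of $G$ on $\Omega_{k}$ correctly so that the notion of stabilizer is well defined, and record that the translations on a rectangle under the given boundary conditions genuinely form a finite group. With these routine checks completed, no further difficulty is expected, and the divisibility follows at once.
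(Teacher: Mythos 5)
Your proposal is correct and matches the paper's approach exactly: the paper treats this proposition as a bare application of Lagrange's theorem to the stabilizer subgroup $G_{S_{i}} \leqslant G$, and your contribution merely fills in the routine verifications (finiteness of the translation group under PBCs, the subgroup axioms for the stabilizer, and the standard coset-counting proof) that the paper leaves implicit.
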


From propositions~\ref{prop:number} and \ref{prop:Lagrange}, it follows  that
$\gcd( \#_{1}( S_{i}) ,|G|) \vdots |G_{S_{i}} |.$ Hence, there is an integer number $n_{S_{i}}$ such that
$\gcd(\#_{1}( S_{i}) ,|G|) =n_{S_{i}} |G_{S_{i}} |,
$
$$
|G_{S_{i}} |=\frac{\gcd( \#_{1}( S_{i}) ,|G|)}{n_{S_{i}}}=\frac{\gcd( k,|G|)}{n_{S_{i}}}.
$$
Obviously $|G|\vdots \gcd( k,|G|)$. Hence, there is an integer number  $\ell _{k}$ such that
$|G|=\ell _{k} \gcd( k,|G|),$
where
$$
\ell _{k} =\frac{|G|}{\gcd( k,|G|)}.
$$

\begin{lemma}
$$
c_{k} \vdots \frac{|G|}{\gcd( k,|G|)}.
$$
\end{lemma}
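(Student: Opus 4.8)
The plan is to express the coefficient $c_k$ in terms of the orbit decomposition established above and then show that the common factor $\ell_k = |G|/\gcd(k,|G|)$ divides it. First I would recall how $c_k$ arises combinatorially: the coefficient of $p^k q^{L_1 L_2 - k}$ in $R_L^{(v)}(p)$ counts the number of percolating configurations $S$ with exactly $k$ occupied sites, i.e.\ $c_k = |\Omega_k|$. This is the crucial identification, since it links the purely algebraic quantity $c_k$ to the group-theoretic setup: each configuration in $\Omega_k$ is a situation acted on by the translation group $G$, and percolation in the vertical direction is preserved under every translation (this is why $G$ is the full translation group respecting the PBCs, and why $\Omega_k$ is $G$-invariant).

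Next I would invoke the orbit decomposition $\Omega_k = \bigsqcup_{i=1}^m O_{S_i}$ already stated, giving
$$
c_k = |\Omega_k| = \sum_{i=1}^m |O_{S_i}| = \sum_{i=1}^m \frac{|G|}{|G_{S_i}|},
$$
where the last equality is the orbit--stabilizer theorem. The heart of the argument is then to show that each summand $|G|/|G_{S_i}|$ is divisible by $\ell_k = |G|/\gcd(k,|G|)$. Using the consequence of Propositions~\ref{prop:number} and~\ref{prop:Lagrange} derived in the excerpt, we have $|G_{S_i}| = \gcd(k,|G|)/n_{S_i}$ for some positive integer $n_{S_i}$. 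Substituting this into the orbit size gives
$$
|O_{S_i}| = \frac{|G|}{|G_{S_i}|} = \frac{|G| \, n_{S_i}}{\gcd(k,|G|)} = n_{S_i} \, \ell_k.
$$
Thus every orbit has size a multiple of $\ell_k$, so $|O_{S_i}| \vdots \ell_k$ for each $i$.

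Finally, summing over all orbits, $c_k = \sum_{i=1}^m n_{S_i}\,\ell_k = \ell_k \sum_{i=1}^m n_{S_i}$, which is manifestly divisible by $\ell_k = |G|/\gcd(k,|G|)$, completing the proof. I expect the main obstacle to be not the algebra—which is essentially a substitution once the preliminary facts are in hand—but rather the clean justification of the opening identification $c_k = |\Omega_k|$ and the verification that $\Omega_k$ is genuinely $G$-invariant (so that the orbit decomposition is well defined). One must check that a translation never changes the number of occupied sites and never destroys or creates vertical percolation; both follow from the fact that $G$ consists of graph automorphisms compatible with the boundary conditions, but this is the conceptual step that makes the rest mechanical, so I would state it carefully before turning the crank.
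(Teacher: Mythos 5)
Your proposal is correct and follows essentially the same route as the paper's proof: the identification $c_k = |\Omega_k|$, the orbit decomposition under the translation group $G$, the orbit--stabilizer theorem, and the substitution $|G_{S_i}| = \gcd(k,|G|)/n_{S_i}$ yielding $|O_{S_i}| = n_{S_i}\,\ell_k$, summed over orbits. Your extra care about the $G$-invariance of $\Omega_k$ (translations preserve both the occupation count and the percolation event) is left implicit in the paper's setup but is the same underlying justification, noted there only in a remark on translation-invariant events.
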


\begin{proof}
 $$
 c_{k} =|\Omega _{k} |=\sum _{i=1}^{m} |O_{S_{i}} |=\sum _{i=1}^{m}\frac{|G|}{|G_{S_{i}} |} =
 \sum _{i=1}^{m} \ell _{k} n_{S_{i}} =\ell _{k} \sum _{i=1}^{m} n_{S_{i}} \vdots \ell_{k}. \qquad\Box
$$

\end{proof}
\begin{remark}
For a torus (PBCs along both directions),  $|G|=L_{1} L_{2}$, while for a cylinder (PBC along one direction, say 1),   $|G|=L_{1}$.
\end{remark}
\begin{remark}
  The consideration above is valid for any event $A$ that is invariant with respect to a translation group.
\end{remark}

Thus, when percolation on a torus is considered, the quantity
$$
\frac{L^2}{\gcd(i,L^2)}
$$
must be a divisor of $c_i$. Here, $\gcd$ means  the greatest common divisor. All obtained polynomials passed this test. Likewise, in the case of the cylinder,
$$
\frac{L}{\gcd(i,L)}
$$
must be a divisor of $c_i$.

\section{Results\label{sec:results}}
\subsection{Percolation on a cylinder\label{subsec:cylinder}}

To reduce the boundary effect, we applied PBC along one direction, i.e., considered the percolation on a cylinder. The percolation polynomials have previously been found  for spanning. In line with our expectations, the estimates of the percolation thresholds obtained using the spanning percolation probability on a torus are better than those obtained for a plane.
\Tref{tab:cylinder} presents estimates $p^{infl}$  and $p^{cc}$ for a cylinder.
\begin{table}
\centering
\caption{Estimates $p^{infl}$ and $p^{cc}$ for a cylinder.}\label{tab:cylinder}
\begin{tabular}{lll}
\br
$L$ & $p^{infl}$ &  $p^{cc}$ \\
\mr
 3 & 0.567797933184318829071364943136 & \\
 4 & 0.564723542379152649640582426903 & 0.575695178318265035538397717037 \\
 5 & 0.564394503892627964586877823099 & 0.577014627673463823472571092197 \\
 6 & 0.565139995218697209864023095252 & 0.579807628772524386223709742598 \\
 7 & 0.566243473648386250544306475856 & 0.582009179107144070215814399129 \\
 8 & 0.567430070283077378434607247195 & 0.583739124639758575223057384096 \\
 9 & 0.568591188366745346953117076744 & 0.585091109091358033925618933501 \\
10 & 0.569685554282538273261796521766 & 0.586161181920254733599409764592 \\
11 & 0.570700318144334727392544417721 & 0.587019710493454014651253291769 \\
12 & 0.571634812621509476323690300113 & 0.587717903064583897066422803208 \\
13 & 0.572493438058381621488860472585 & 0.588292943898443877174066922998 \\
14 & 0.573282448954560869061470552053 & 0.588772093765971493025856683878 \\
15 & 0.574008491924559902782966679906 & 0.589175584440447625295942237502 \\
16 & 0.574677957108836363064812168494 & 0.589518631123110527410156176131 \\
\br
\end{tabular}
\end{table}

\Fref{fig:estimates-cyl} and \tref{tab:cylinder} demonstrates behaviours of estimates $p^{infl}$ and $p^{cc}$ for a cylinder.
\begin{figure}[!htb]
  \centering
  \includegraphics[width=0.6\columnwidth]{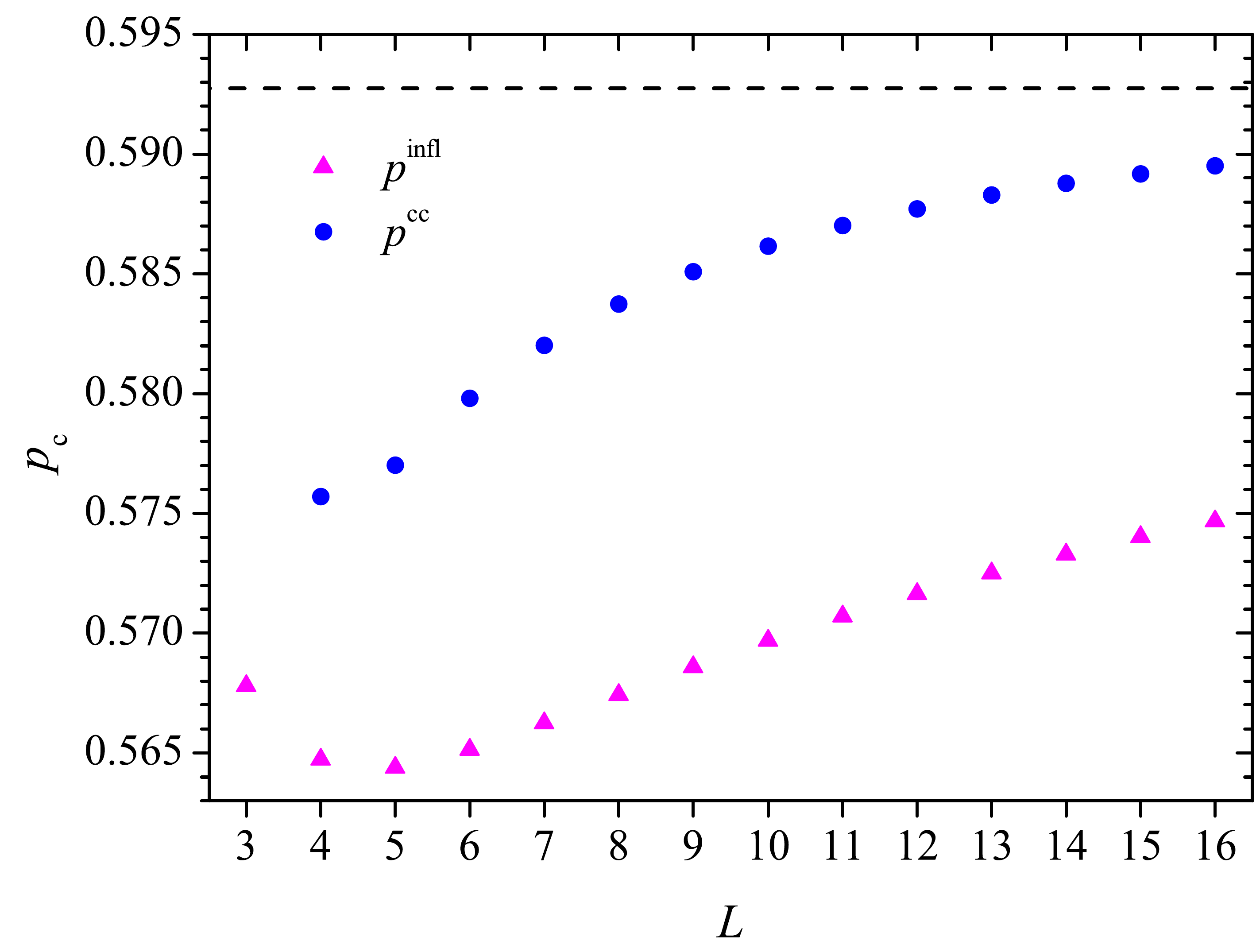}
  \caption{Estimates of the percolation threshold for cylinder $p^{infl}$ (\fulltriangle), and $p^{cc}$ (\fullcircle) plotted against the system size, $L$.   Dashed line corresponds to the most accurate known value of the percolation threshold $p_c = 0.592\,746\,050\,792\,10(2)$~\cite{Jacobsen2015}. \label{fig:estimates-cyl}}
\end{figure}

\Fref{fig:RvspCylinder} demonstrates the percolation (spanning) probabilities $R_L(p)$ for a cylinder, $L \in [3,16]$.
\begin{figure}[!htb]
  \centering
  \includegraphics[width=0.5\columnwidth]{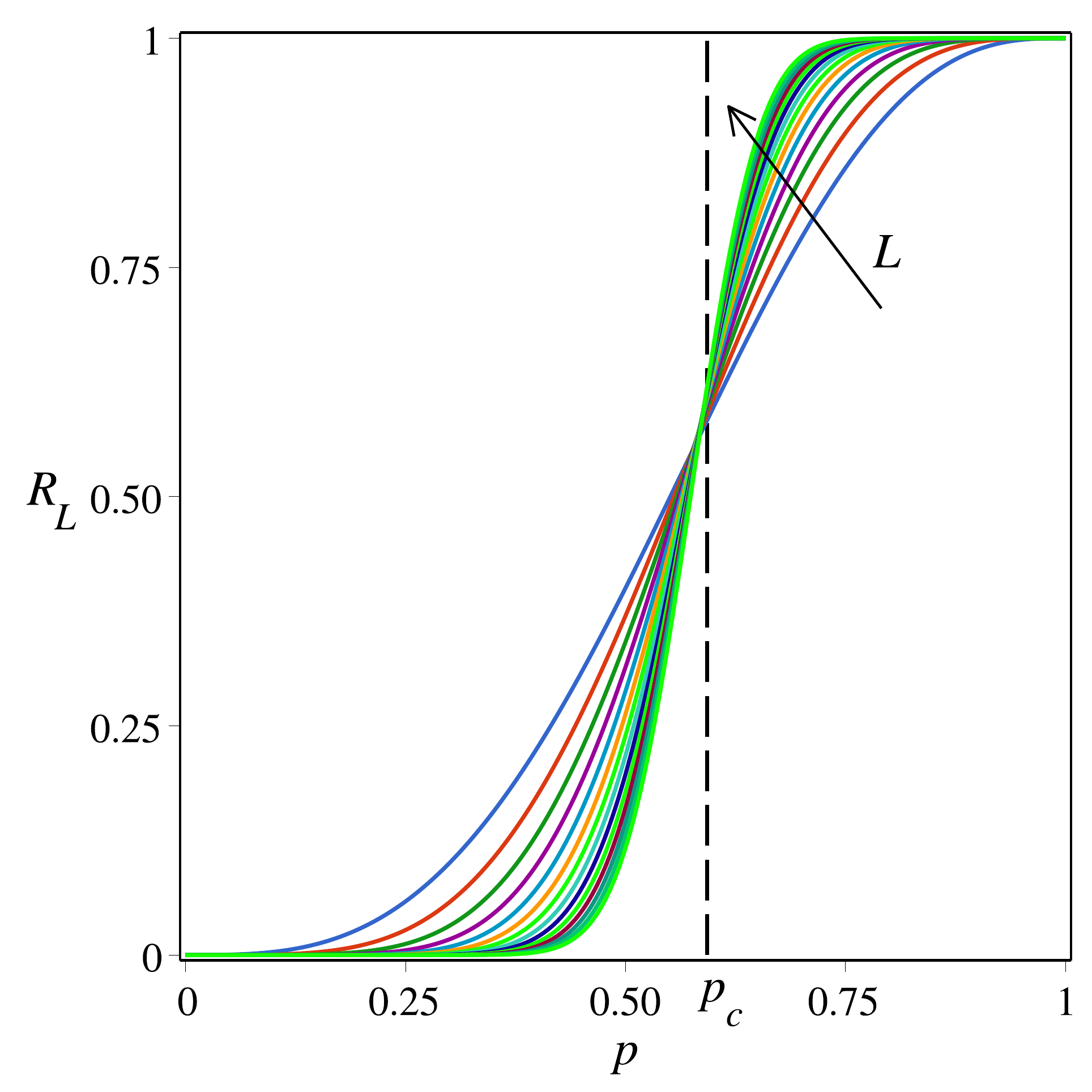}\includegraphics[width=0.5\columnwidth]{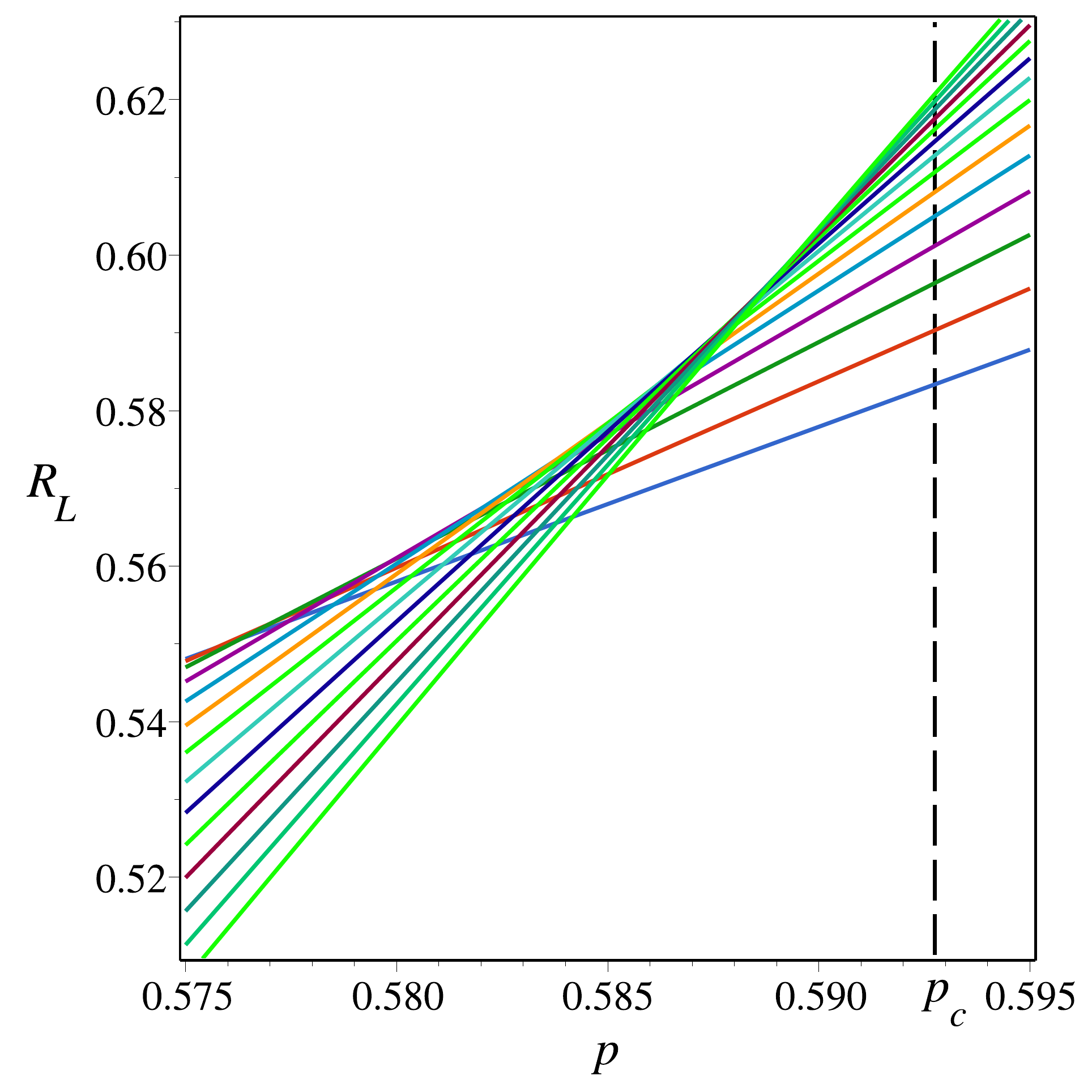}
  \caption{Percolation probabilities $R_L(p)$ for a cylinder; full view (left panel) and enlarged region near the percolation threshold (right panel). The larger the system size, the sharper the step.\label{fig:RvspCylinder}}
\end{figure}

\subsection{Percolation on a torus\label{subsec:torus}}

\Fref{fig:RvspTorus} demonstrates the percolation probabilities $R_L(p)$ for a torus, $L \in [3,12]$. Our polynomials for $L \in [3,7]$ match exactly the results by Mertens~\cite{Ziff2002,MertensHomePage}. It seems that $L=12$ is the limit that can be reached within a reasonable computation time. This limit is unlikely to be overcome in the near future.
\begin{figure}[!htb]
  \centering
  \includegraphics[height=0.32\textheight]{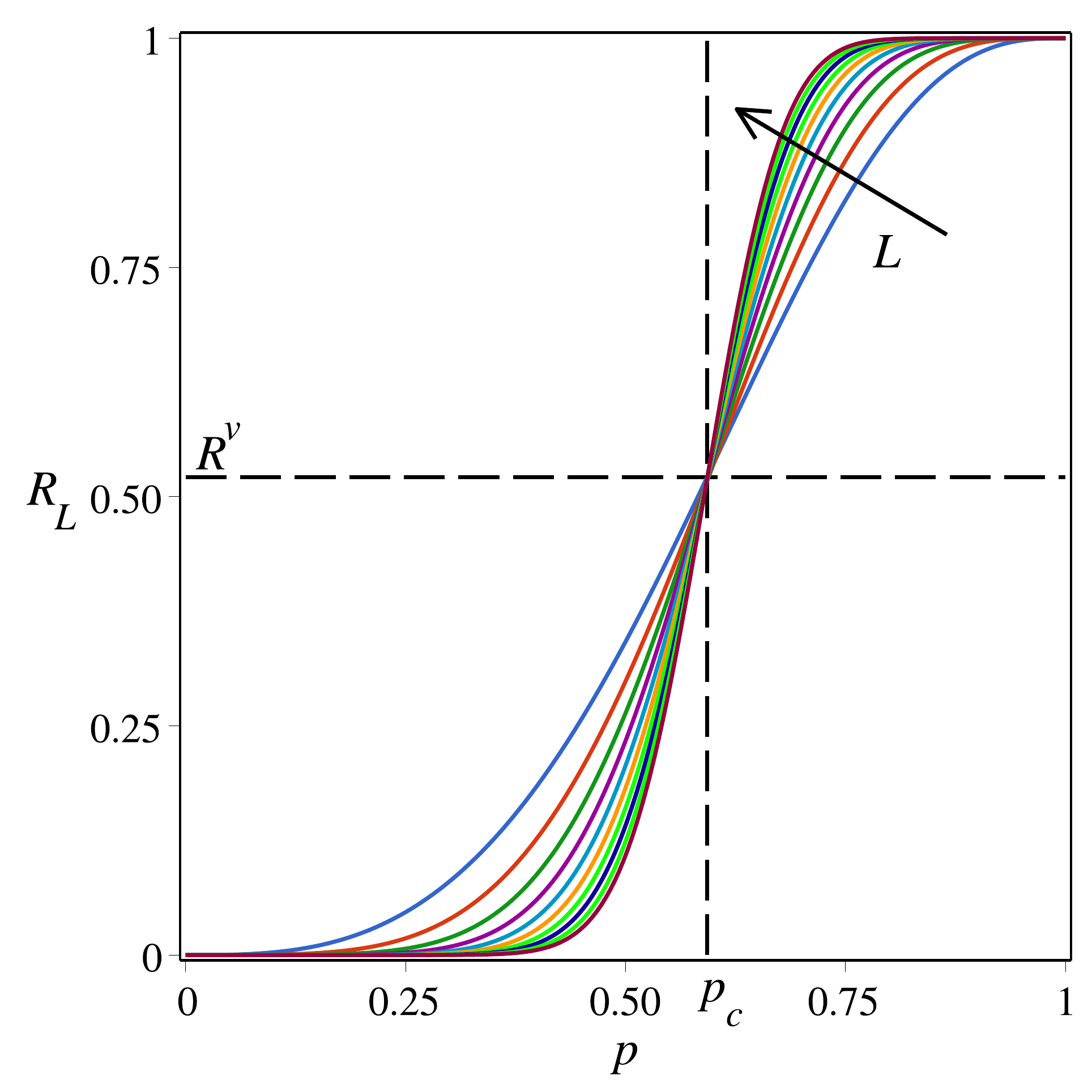}\includegraphics[height=0.32\textheight]{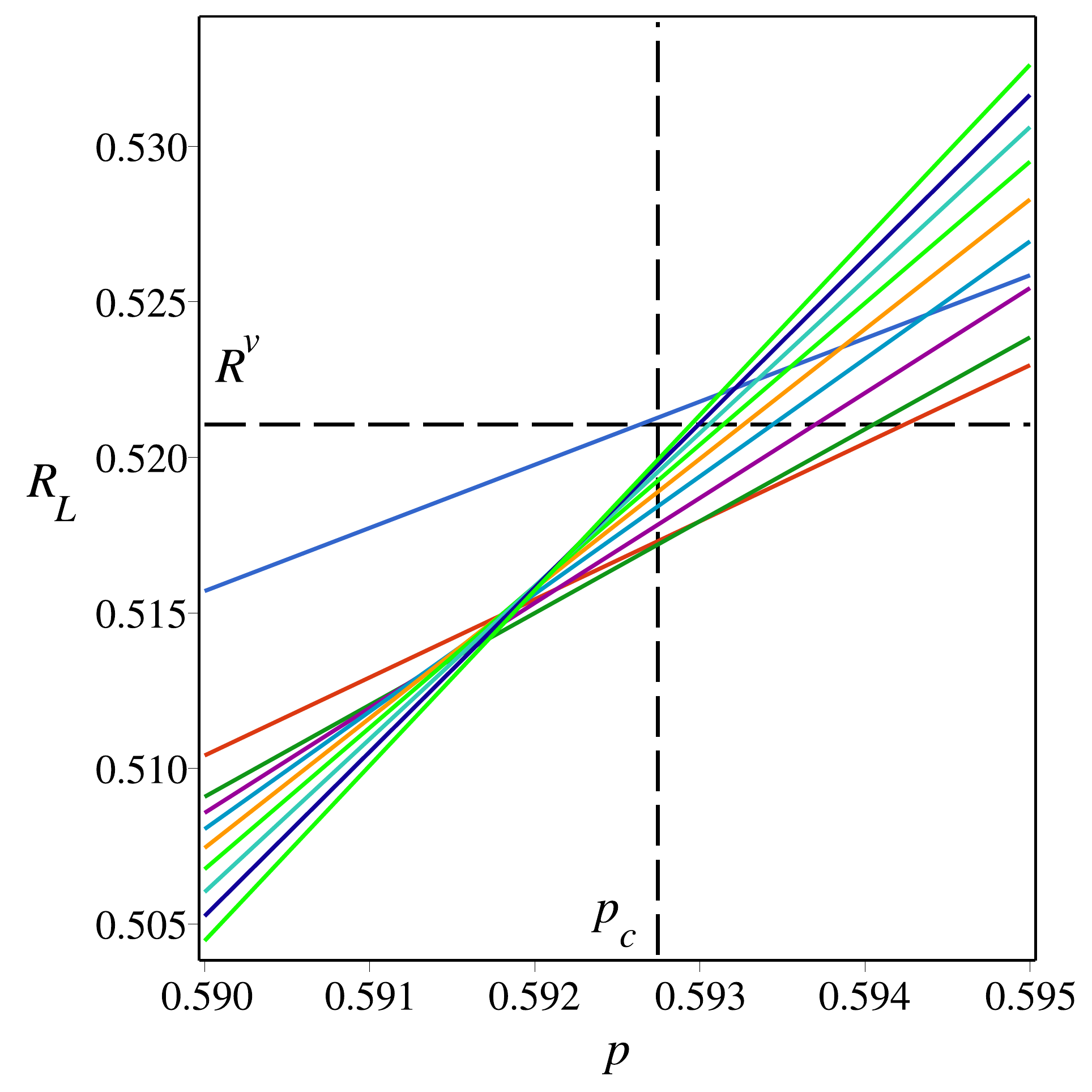}
  \caption{Percolation probabilities $R_L(p)$ for a torus; full view (left panel) and enlarged region near the percolation threshold (right panel). The horizontal dashed line corresponds to the value $R^v = 0.5210\dots$~\cite{Mertens2016}, while the vertical dashed line corresponds to the value $p_c = 0.5927\dots$. The  larger the system size, the sharper the step.\label{fig:RvspTorus}}
\end{figure}

The first derivatives of the percolation probabilities $R_L(p)$ for the torus are shown in~\fref{fig:dRdpvspTorus}.
\begin{figure}[!htb]
  \centering
  \includegraphics[width=0.5\columnwidth]{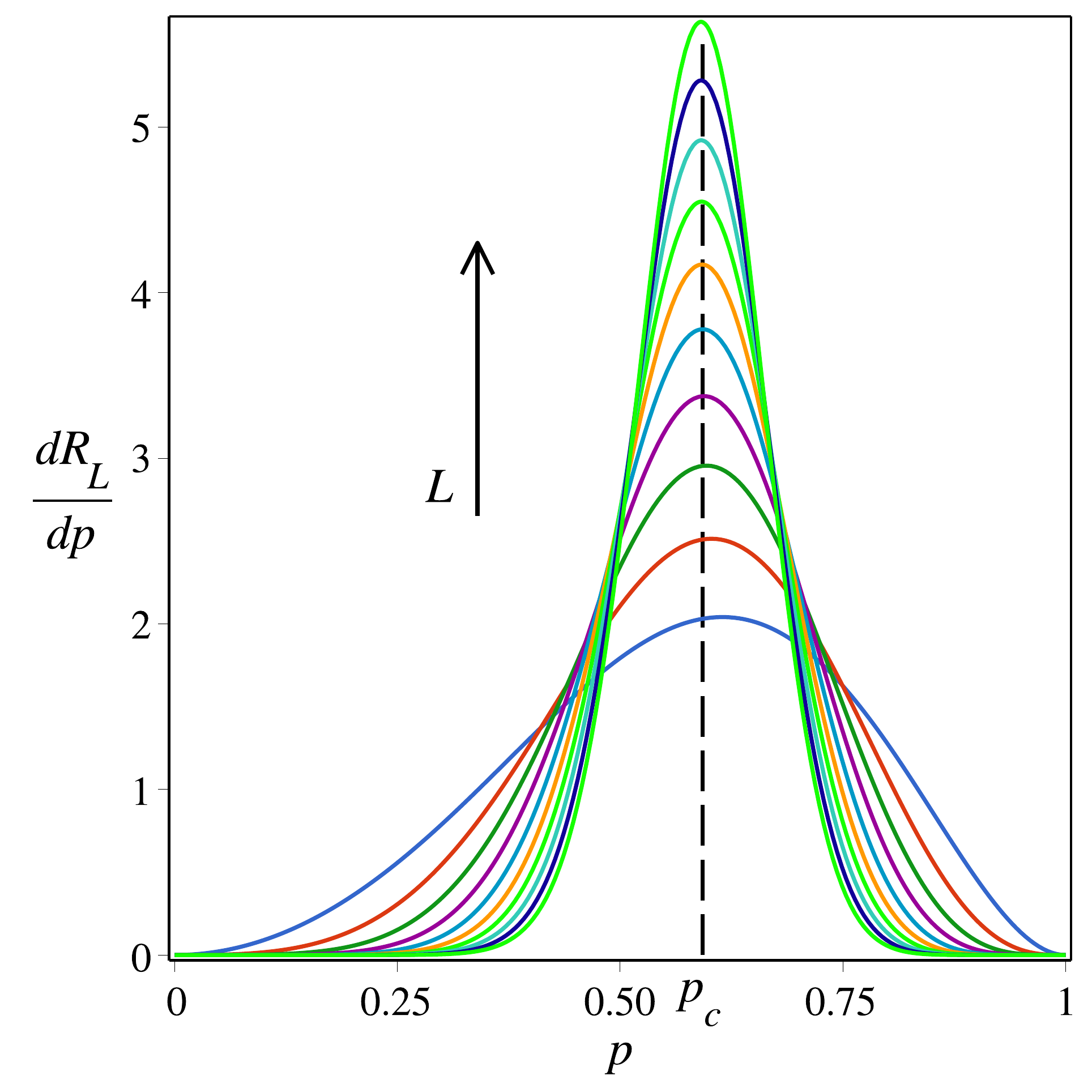}
  \caption{First derivatives of the percolation probabilities $R_L(p)$ for a torus. The larger the system size, the higher the maximum. \label{fig:dRdpvspTorus}}
\end{figure}

\Fref{fig:d2Rdp2vspTorus} shows the second derivatives of the percolation probabilities $R_L(p)$ for the torus.
\begin{figure}[!htb]
  \centering
  \includegraphics[height=0.32\textheight]{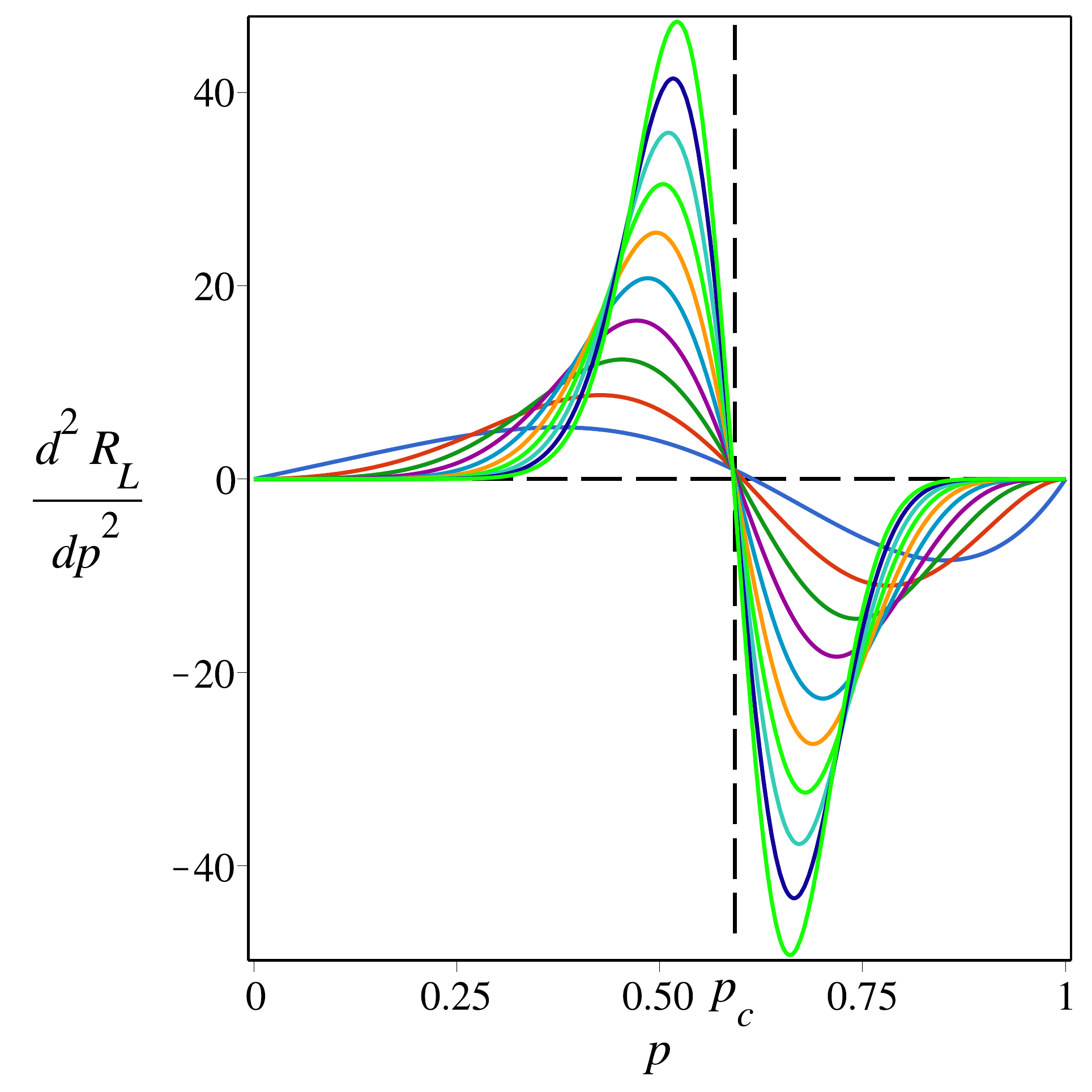}\includegraphics[height=0.32\textheight]{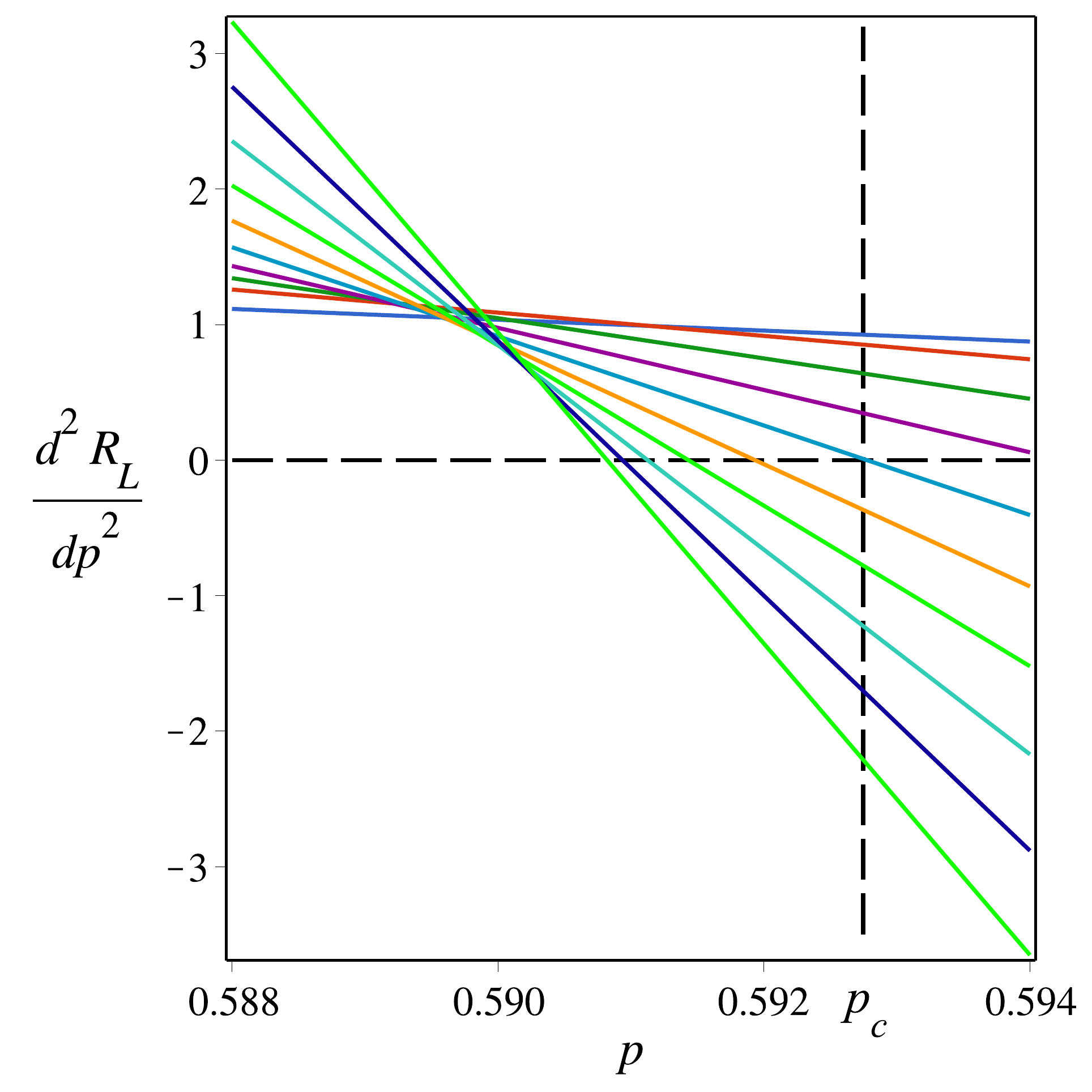}
  \caption{Second derivatives of the percolation probabilities $R_L(p)$ for a torus; full view (left panel) and enlarged region near the percolation threshold (right panel). The larger the system size, the  higher the maximum.\label{fig:d2Rdp2vspTorus}}
\end{figure}

\Tref{tab:torus} presents estimates $p^{\ast}$, $p^{infl}$,  and $p^{cc}$ for a torus.
\begin{table}
\centering
\caption{Estimates $p^{\ast}$, $p^{infl}$, and $p^{cc}$ for a torus.}\label{tab:torus}
\begin{tabular}{lll}
\br
$L$ & $p^{\ast}$ &  $p^{cc}$ \\
\mr
 3 & 0.592639952553406926057811117921 & \\
 4 & 0.594241786767314544427579244396 & 0.601048018206869318922976758793\\
 5 & 0.594053483642978334127033318840 & 0.592989260418213921286737949416\\
 6 & 0.593701218965827543995008913020 & 0.591226641078777884419327732025\\
 7 & 0.593442713470113120566237227663 & 0.591284821573047295994536552773\\
 8 & 0.593265367086649457890784962079 & 0.591551386819268445717847740778\\
 9 & 0.593142712666508735278138783115 & 0.591795816259173927085827777018\\
10 & 0.593055910631289361098943275939 & 0.591987393802541280038522512867\\
11 & 0.592992938143880591354064973534 & 0.592132592840694226990290804247\\
12 & 0.592946179685333445817365777051 & 0.592250990072130062279383528928\\
\br
\end{tabular}
\medskip
\begin{tabular}{ll}
\br
$L$ & $p^{infl}$  \\
\mr
3  & 0.614851397846434431296649483909 \\
4  & 0.602515335071713060222047930819 \\
5  & 0.597021025923632529620024559897 \\
6  & 0.594251029001068374748168858373 \\
7  & 0.592774120165894290989073926981 \\
8  & 0.591933286683125034089609750005 \\
9  & 0.591432548170789434233257646110 \\
10 & 0.591126149630403711426846910202\\
11 & 0.590936706655684238266345408782\\
12 & 0.590821325857797991285938830656\\
\br
\end{tabular}
\end{table}

\Fref{fig:estimates} and \tref{tab:torus} evidence that estimate $p^{\ast}$ is more promising due to its faster convergence.
\begin{figure}[!htb]
  \centering
  \includegraphics[width=0.6\columnwidth]{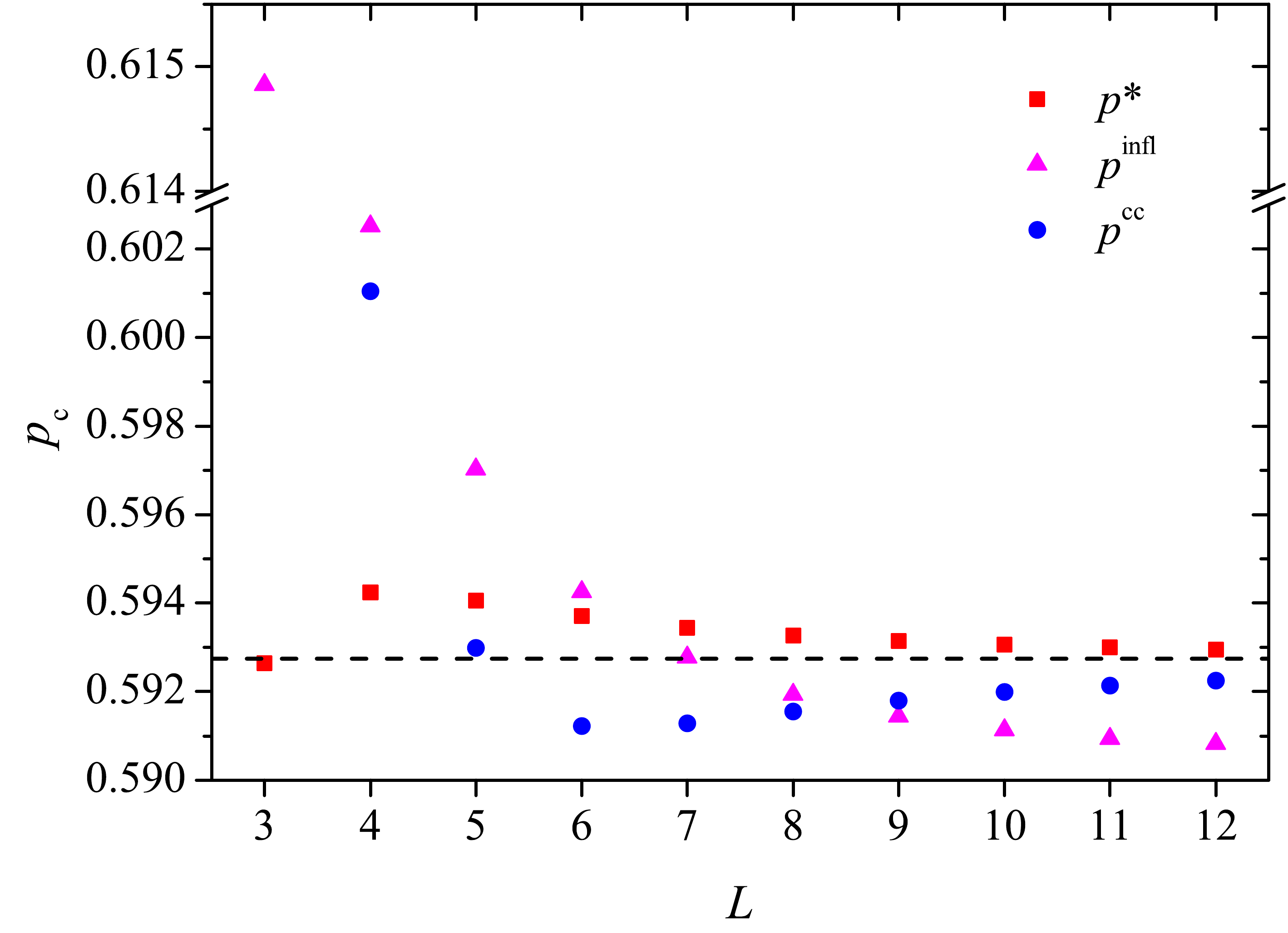}
  \caption{Estimates of the percolation threshold for torus, $p^\ast_c$  (\fullsquare), $p^{infl}$ (\fulltriangle), and $p^{cc}$ (\fullcircle) plotted against the system size, $L$.   Dashed line corresponds to the most accurate known value of the percolation threshold $p_c = 0.592\,746\,050\,792\,10(2)$~\cite{Jacobsen2015}. \label{fig:estimates}}
\end{figure}

Using the estimate $p^\ast_c$, the simplest FSS~\eref{eq:FSSseries} with only two first terms leads to $p_c \approx 0.59273(3)$ (\fref{fig:FSS}), since with the available number of points suitable for extrapolation, the following terms in~\eref{eq:FSSseries}  do not affect the value of the percolation threshold.
\begin{figure}[!htb]
  \centering
  \includegraphics[width=0.6\columnwidth]{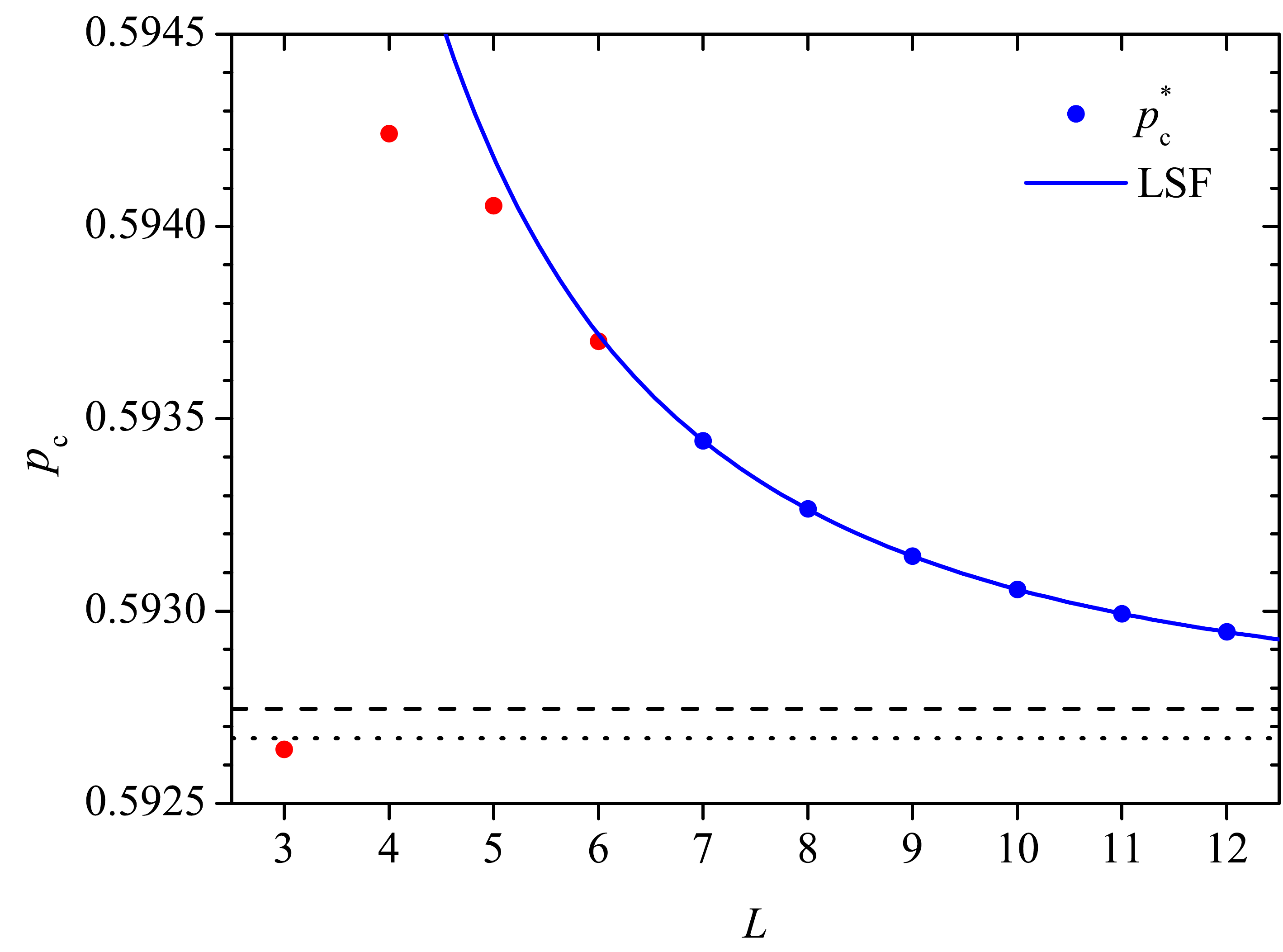}
  \caption{FSS for $p^\ast_c$ on a torus. Dashed line corresponds to the most accurate known value of the percolation threshold~\cite{Jacobsen2015}.   Dotted line corresponds to our value of the percolation threshold $p_c = 0.59269$.\label{fig:FSS}}
\end{figure}

\subsection{Comparison}\label{subsec:comparison}

\Fref{fig:TorusCylinderPlane} compares the dependencies of the percolation threshold, $p_c$, for a plane, a cylinder, and a torus. Even without any FSS, estimates of the percolation threshold obtained for the torus converge to the percolation threshold in the thermodynamic limit much faster when compared to those obtained for the plane and for the cylinder.
\begin{figure}[!htb]
  \centering
  \includegraphics[width=0.6\columnwidth]{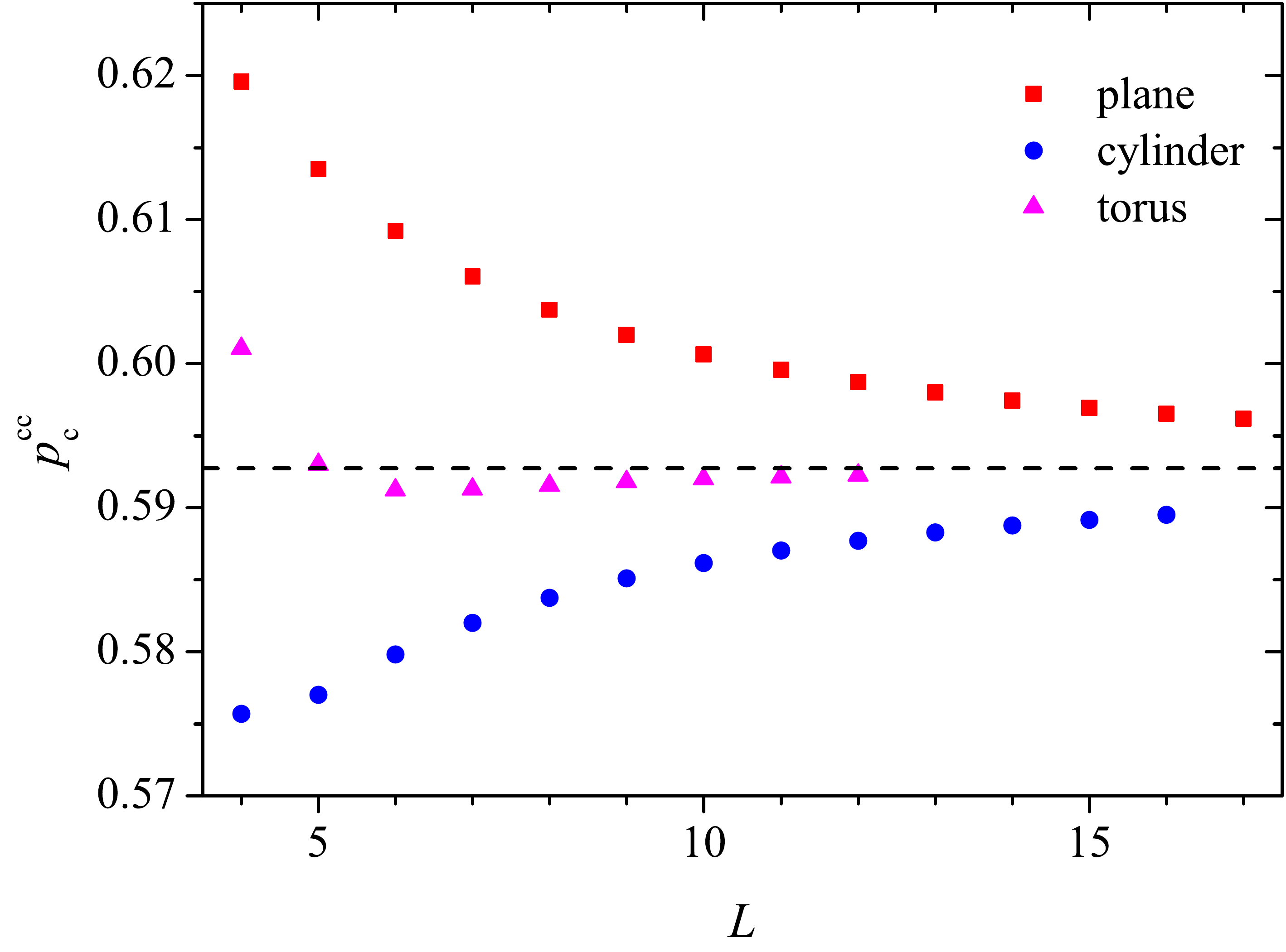}
  \caption{Estimates of the percolation threshold, $p_c^{cc}$, for a plane (\fullsquare), a cylinder (\fullcircle), and a torus (\fulltriangle) plotted against the system size, $L$.   Dashed line corresponds to the most accurate known value of the percolation threshold~\cite{Jacobsen2015}.\label{fig:TorusCylinderPlane}}
\end{figure}

\section{Conclusion\label{sec:concl}}

We studied site percolation on a square lattice. We found the percolation polynomials for a square region with (i)~open boundaries (a plane), (ii)~PBC along one direction (a cylinder), and (iii)~PBCs along both mutually perpendicular directions (a torus). For the plane, the percolation polynomials obtained are exactly the same as those obtained by other authors~\cite{Ziff2002,MertensHomePage}, which is an indirect confirmation of the correctness of our results for the two other cases. Further, we proposed a divisibility test. All the obtained polynomials passed this test, providing further confirmation of the correctness of our results.

Three different estimates of the percolation threshold were used. We found that the estimate
corresponding to the point, where the percolation polynomial equals its universal value in the thermodynamic limit, provides a faster convergence than the other estimates.  We found that, as the system size increased, any estimate of the percolation threshold exhibited faster convergence when PBCs were applied to the system.

It seems that both the Monte Carlo method and the method of percolation polynomials have achieved the utmost accuracy in determining the percolation threshold. Further refinement of the percolation threshold values is hardly possible by using these methods. Although the use of percolation polynomials detected some regularities in the coefficients of the polynomials~\cite{Mertens2019,Mertens2021}, these regularities have not yet enabled to propose a formula for finding all the coefficients without laborious calculations. Even very sophisticated and difficult FSS methods also seem to have reached their limits.

\ack
We dedicate this paper to Prof. Robert M.~Ziff, whose numerous influential papers devoted to percolation inspired our study.

Y.Y.T. and A.V.E. acknowledge the funding from the Foundation for the Advancement of Theoretical Physics and Mathematics ``BASIS'', grant~20-1-1-8-1.

\appendix
\section{Mathematical background of the algorithm}\label{sec:BG}

Let $L_{1} ,L_{2} \in \mathbb{N}^\ast$, $p\in [ 0,1]$, $q=1-p$, $N=L_{1} L_{2}$. Define an experiment. Let $S$ be a table $L_{1} \times L_{2}$ that  consists of cells. The values of these cells are independent and identically distributed (i.i.d.) random variables $ S_{i,j} ( 1\leqslant i\leqslant L_{1} ,1\leqslant j\leqslant L_{2})$ obeying a Bernoulli distribution with parameter $p$. The experiment results in the table $S$, which cells got their values.

This experiment defines a probability space (a probability triple) $(\Omega,\mathcal{B}, \mathbb{P})$, where the sample space $\Omega $ is a set of tables $L_{1} \times L_{2}$ whose cells are $\{0,1\}$. We name the  possible outcomes as states. The $\sigma$-algebra $\mathcal{B}$ is a set of all the subsets $B\subset \Omega $. We name the elements of the $\sigma$-algebra as \emph{events}.
Let $S\in \Omega $, then assume $\mathbb{P}( S) =p^{\#_{1}( S)} q^{\#_{0}( S)}$, where $\#_{1}( S)$ and $\#_{0}( S)$ are the numbers of 1 and 0 values, respectively, in table~$S$.
\begin{remark}
We have defined the elementary outcomes as a set of cells in the table. However, in fact, all the reasoning in this work depends only on the number of cells, and does not depend on how exactly these cells are arranged. We need a shape for the table only for the visual convenience of applying the results of this work to the site percolation.
\end{remark}

Let $A$ be a non-decreasing event. We can describe in general terms a whole family of algorithms (or one algorithm, but with many parameters) for calculating the explicit formulas $\mathbb{P}(A)$ depending on the parameter~$p$.

\begin{proposition}
Probability $\mathbb{P}( A)$ of the event $A\in \Omega $ is a polynomial $R_{N,A}( p)$ depending on parameter $p$. The degree of the polynomial is $N$. This polynomial can be uniquely represented as a homogeneous polynomial of two variables $R_{N,A}( p,q) =\sum _{k=0}^{N} c_{k} p^{k} q^{N-k}$.
The coefficients of the polynomial $R_{N,A}( p,q)$ have clear combinatorial sense, viz.,  $c_{k}$ is the number of states, those that obey the event $A$ with exactly $k$ cells occupied by values of~1. The coefficients of the polynomial  $R_{N,A}( p,q)$ and $R_{N,A}( p)$ are connected via a linear relationship
$$
c_{n} =\sum _{k=0}^{n} b_{k}
\left(
  \begin{array}{c}
    N-k \\
    N-n \\
  \end{array}
\right)
.
$$
\end{proposition}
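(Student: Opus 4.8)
The plan is to compute $\mathbb{P}(A)$ directly from the definition of the probability triple. Since $A\subseteq\Omega$ is a finite set of pairwise disjoint elementary outcomes, $\mathbb{P}(A)=\sum_{S\in A}\mathbb{P}(S)=\sum_{S\in A}p^{\#_{1}(S)}q^{\#_{0}(S)}$. First I would partition $A$ by the number of occupied cells: for $0\leqslant k\leqslant N$ set $c_{k}=|\{S\in A:\#_{1}(S)=k\}|$. Every such state has $\#_{0}(S)=N-k$, so collecting equal monomials yields
\[
R_{N,A}(p,q)=\sum_{k=0}^{N}c_{k}p^{k}q^{N-k},
\]
a polynomial that is homogeneous of total degree $N$ in $(p,q)$ and in which, by construction, $c_{k}$ is exactly the number of admissible states with $k$ occupied cells (so $0\leqslant c_{k}\leqslant\binom{N}{k}$). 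Uniqueness of this representation is immediate: the monomials $\{p^{k}q^{N-k}\}_{k=0}^{N}$ are linearly independent in $\mathbb{R}[p,q]$, so a homogeneous degree-$N$ polynomial determines its coefficients uniquely. Putting $q=1-p$ gives $R_{N,A}(p)$ as a one-variable polynomial of degree at most $N$; its leading coefficient equals $\sum_{k=0}^{N}(-1)^{N-k}c_{k}$, which is nonzero for the non-degenerate percolation events at hand, so the degree is $N$.

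To relate the two coefficient families I would substitute $q=1-p$ and expand. Writing $R_{N,A}(p)=\sum_{k=0}^{N}c_{k}p^{k}(1-p)^{N-k}$, expanding $(1-p)^{N-k}$ by the binomial theorem, and collecting the coefficient $b_{n}$ of $p^{n}$ in $R_{N,A}(p)=\sum_{n}b_{n}p^{n}$ gives the forward relation
\[
b_{n}=\sum_{k=0}^{n}(-1)^{n-k}\binom{N-k}{n-k}c_{k}.
\]
This is a lower-triangular linear system with unit diagonal (the $k=n$ term contributes $\binom{N-n}{0}=1$), hence invertible; inverting it and using $\binom{N-k}{n-k}=\binom{N-k}{N-n}$ produces the claimed formula $c_{n}=\sum_{k=0}^{n}b_{k}\binom{N-k}{N-n}$.

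The one genuinely nontrivial point is verifying that the triangular matrices with entries $(-1)^{n-k}\binom{N-k}{n-k}$ and $\binom{N-k}{n-k}$ are mutually inverse, i.e. that $\sum_{k=j}^{n}(-1)^{k-j}\binom{N-j}{k-j}\binom{N-k}{n-k}=\delta_{nj}$. I would dispatch this with the subset-of-a-subset identity $\binom{N-j}{k-j}\binom{N-k}{n-k}=\binom{N-j}{n-j}\binom{n-j}{k-j}$, which reduces the sum to $\binom{N-j}{n-j}\sum_{k=j}^{n}(-1)^{k-j}\binom{n-j}{k-j}=\binom{N-j}{n-j}(1-1)^{n-j}$; this vanishes unless $n=j$ and equals $1$ when $n=j$. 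Everything else is routine bookkeeping with the binomial theorem, so this combinatorial inversion is where I expect the only real work to lie.
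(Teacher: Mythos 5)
Your proof is correct, and it is worth noting that the paper itself states this proposition without any proof at all, treating it as self-evident; so yours is necessarily a different (indeed the only explicit) route. All the pieces check out: the partition of $A$ by $\#_{1}(S)$ gives the homogeneous form with the stated combinatorial meaning of $c_{k}$, linear independence of the monomials $p^{k}q^{N-k}$ gives uniqueness, your forward relation $b_{n}=\sum_{k=0}^{n}(-1)^{n-k}\left(\begin{array}{c}N-k\\ n-k\end{array}\right)c_{k}$ is the correct expansion of $p^{k}(1-p)^{N-k}$, and your inversion identity via the subset-of-a-subset formula is valid (I verified that $\sum_{k=j}^{n}(-1)^{k-j}\left(\begin{array}{c}N-j\\ k-j\end{array}\right)\left(\begin{array}{c}N-k\\ n-k\end{array}\right)=\delta_{nj}$). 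However, the matrix inversion --- which you flag as the only real work --- can be bypassed entirely: to pass from $R_{N,A}(p)=\sum_{k}b_{k}p^{k}$ to the homogeneous form, simply homogenize each monomial by writing $p^{k}=p^{k}(p+q)^{N-k}$ and expanding, which gives
\begin{equation*}
p^{k}=\sum_{n=k}^{N}\left(\begin{array}{c}N-k\\ n-k\end{array}\right)p^{n}q^{N-n},
\qquad\mbox{hence}\qquad
c_{n}=\sum_{k=0}^{n}b_{k}\left(\begin{array}{c}N-k\\ N-n\end{array}\right),
\end{equation*}
with no inverse matrix and no alternating-sum identity needed. Your route buys the explicit inverse pair of triangular matrices (both directions of the change of basis), which is mildly more information, at the cost of the combinatorial lemma.

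One caveat: your justification of the claim that the one-variable polynomial has degree exactly $N$ --- ``the leading coefficient $\sum_{k}(-1)^{N-k}c_{k}$ is nonzero for the non-degenerate percolation events at hand'' --- is an unproved assertion, and as stated for general events it is false: for $A=\Omega$ one gets $R\equiv 1$, and for the nondecreasing event ``cell $z_{1}$ is occupied'' one gets $R(p)=p$, degree $1<N$. To be fair, the paper's own claim ``the degree of the polynomial is $N$'' suffers the same imprecision (the proposition is stated for an arbitrary event $A$); the statement is only unambiguously true for the homogeneous two-variable representation, whose total degree is $N$ by construction. So this is a flaw you inherit from the proposition rather than introduce, but your hedge should not be mistaken for a proof of that clause.
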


\begin{definition}
Let $B$ be a table $L_{1} \times L_{2}$ whose cells are $\{0,1,?\}$. The table $B$ uniquely identifies the event $\{S\in \Omega \mid \forall ( i,j) ( b_{i,j} \neq ?) \Rightarrow ( s_{i,j} =b_{i,j})\}$. We define such an event as a \emph{situation}, which we will identify with the table $B$ that generates it. Thus, a set of situations is identified with a set of tables of size $L_{1} \times L_{2}$ whose cells are $\{0,1,?\}$. The support of the situation is the set of cells not occupied by the symbol~$?$.
\end{definition}

\begin{definition}
Situations $S_{1}$ and  $S_{2}$ are  equivalent with respect to the event $A$, if  $\mathbb{P}( A\mid S_{1}) =\mathbb{P}( A\mid S_{2})$. The corresponding notation is $S_{1} \sim S_{2} ( A)$ or, more simply, $S_{1} \sim S_{2}$,  when it is clear from the context what kind of event $A$ is under consideration.
\end{definition}

Let us number all the cells of the table $L_{1} \times L_{2}$ as $z_{1} ,z_{2} ,\dots ,z_{N}$.
\begin{definition}
  $B_{0}$ is a set of events-situations with support $\varnothing $, $B_{0} =\left\{??\right\}$, where $??$ denotes the situation when each cell is occupied by the sign $?$ (question mark).
$B_{k}$ is a set of events-situations with support  $\{z_{1} ,z_{2} ,\dots ,z_{k}\}$ ($1\leqslant k\leqslant N$).
\end{definition}

Let $A$ be an event, then $\mathbb{P}( A) =\mathbb{P}\left( A\mid ??\right)$. Thus, to find $\mathbb{P}( A)$, the conditional probability should be found $\mathbb{P}\left( A\mid ??\right)$.

\begin{definition}
  A layer of order $k$ with respect to the event $A$ is a triplet $\mathcal{L}_{k}( A) =( L_{k} ,\{b_{B}( p,q) \mid B\in L_{k}\} ,g_{k}( p,q))$, where $L_{k} \subset B_{k}$, while $ b_{B}( p,q)$, ($B\in L_{k}$) and $g_{k}( p,q)$ are the homogeneous polynomials of degree $k$, such that $ \mathbb{P}( A) =\sum _{S\in L_{k}}\mathbb{P}( A\mid S) b_{S}( p,q) +g_{k}( p,q)$.
\end{definition}

\begin{remark}
This definition does not describe a layer uniquely. Anything that meets this definition may be denoted as a layer.
\end{remark}

\begin{proposition}
When a layer $\mathcal{L}_{k}( A)$ is given, then computation of $\mathbb{P}( A)$ reduces to  computation of the conditional probabilities $\mathbb{P}( A\mid S)$, ($S\in L_{k}$).
\end{proposition}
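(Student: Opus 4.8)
The plan is to observe that the statement follows essentially immediately from the definition of a layer, and to make explicit why the defining formula constitutes a genuine computational reduction rather than merely an identity. First I would recall the defining property of the layer $\mathcal{L}_{k}(A)$, namely the relation
$$
\mathbb{P}(A) = \sum_{S\in L_{k}} \mathbb{P}(A\mid S)\, b_{S}(p,q) + g_{k}(p,q),
$$
which holds by definition of a layer of order $k$. The entire content of the proof is to read this identity as a procedure rather than as a passive equation.

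Next I would argue that every ingredient on the right-hand side, apart from the conditional probabilities, is already fixed. The set $L_{k}$, the family of homogeneous polynomials $\{b_{S}(p,q)\mid S\in L_{k}\}$, and the homogeneous polynomial $g_{k}(p,q)$ are precisely the data constituting the triplet $\mathcal{L}_{k}(A)$; they are supplied together with the layer and require no further computation. Moreover $L_{k}\subseteq B_{k}$ is a finite set, so the sum is a finite linear combination of polynomials of degree $k$. Hence, once the finitely many conditional probabilities $\mathbb{P}(A\mid S)$ for $S\in L_{k}$ are known, the right-hand side becomes a fully determined homogeneous polynomial of degree $N$ in $p$ and $q$, obtained by finitely many polynomial multiplications and additions.

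Finally I would conclude that $\mathbb{P}(A)$ is recovered by substituting the computed conditional probabilities into the defining relation, so that the task of computing $\mathbb{P}(A)$ is reduced to that of computing the collection $\{\mathbb{P}(A\mid S)\mid S\in L_{k}\}$, which is exactly the assertion. There is no genuine obstacle in the formal argument, since the proposition is close to a direct restatement of the definition; the only point worth emphasising is that the reduction is substantive precisely when $|L_{k}|$ is small relative to the $2^{k}$ situations in $B_{k}$, so that far fewer conditional probabilities must be evaluated than in a naive enumeration. This observation is what motivates the dynamic-programming passage from $\mathcal{L}_{k}(A)$ to $\mathcal{L}_{k+1}(A)$ developed in the sequel.
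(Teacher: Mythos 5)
Your proposal is correct and takes essentially the same route as the paper, which states this proposition without proof as an immediate consequence of the defining relation $\mathbb{P}(A) = \sum_{S\in L_{k}}\mathbb{P}(A\mid S)\, b_{S}(p,q) + g_{k}(p,q)$ built into the definition of a layer. Your explicit remarks --- that the data $L_{k}$, $\{b_{S}\}$, $g_{k}$ come supplied with the layer, that the sum is finite, and that the reduction is substantive when $|L_{k}|$ is small --- faithfully unpack exactly what the paper leaves implicit.
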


Let $L_{0} =B_{0}$, $b_{??}( p,q) \equiv 1$ and $g_{0}( p,q) \equiv 0$, i.e., the layer $\mathcal{L}_{0}$ is given.
\begin{proposition}
When a layer $\mathcal{L}_{0}$ is given, computation of $\mathbb{P}( A)$ reduces to a computation of the conditional probabilities $\mathbb{P}( A\mid S)$, ($S\in L_{0}$):
$\mathbb{P}( A) =\sum _{S\in L_{0}}\mathbb{P}( A\mid S)  b_{S}( p,q) +g_{0}( p,q)$.
\end{proposition}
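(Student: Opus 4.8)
The plan is to observe that the asserted identity is nothing other than the defining equation of a layer of order $0$, so that the proposition is essentially a reading-off of the Definition rather than a fresh computation. Recall that a layer of order $k$ with respect to $A$ was defined as a triplet $\mathcal{L}_{k}(A)=(L_{k},\{b_{B}(p,q)\mid B\in L_{k}\},g_{k}(p,q))$ whose data are required to satisfy $\mathbb{P}(A)=\sum_{S\in L_{k}}\mathbb{P}(A\mid S)\,b_{S}(p,q)+g_{k}(p,q)$. Setting $k=0$, the hypothesis that a layer $\mathcal{L}_{0}$ is given means precisely that this identity holds with $k=0$, which is exactly the displayed formula. I would therefore state first that the equality itself needs no proof beyond invoking the Definition.

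The substantive point to record is the reduction claim. On the right-hand side the homogeneous polynomials $b_{S}(p,q)$, $S\in L_{0}$, and $g_{0}(p,q)$ are part of the data that constitute the layer and are hence already in hand, while $L_{0}\subset B_{0}$ is a finite set. Consequently the only quantities still to be determined in order to evaluate the polynomial $\mathbb{P}(A)=R_{N,A}(p,q)$ are the finitely many conditional probabilities $\mathbb{P}(A\mid S)$, $S\in L_{0}$; computing $\mathbb{P}(A)$ therefore reduces to computing these conditional probabilities. I would emphasise that this holds for an arbitrary given layer of order $0$, irrespective of what $L_{0}$ is, and that it is exactly the $k=0$ instance of the preceding proposition for a general layer $\mathcal{L}_{k}$, so that no argument beyond specialisation is required.

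Finally, to turn this into a usable base case for the forward recursion, I would verify that the concrete triplet exhibited just above the statement does satisfy the Definition, so that the reduction genuinely applies to it. Here $L_{0}=B_{0}=\{??\}$ has support $\varnothing$, the constants $b_{??}\equiv 1$ and $g_{0}\equiv 0$ are homogeneous polynomials of degree $0$, and the defining identity collapses to $\mathbb{P}(A)=\mathbb{P}(A\mid ??)\cdot 1+0=\mathbb{P}(A\mid ??)$, which is valid because $\mathbb{P}(A)=\mathbb{P}(A\mid ??)$ holds for every event $A$.

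I do not anticipate a genuine obstacle, since the content is almost entirely definitional. The one point requiring care is to keep two logically distinct facts apart: the reduction formula, which is true for any given layer simply by the Definition, and the separate verification that the particular trivial triplet is in fact a valid layer, which is the only place where the earlier identity $\mathbb{P}(A)=\mathbb{P}(A\mid ??)$ is actually invoked. Conflating these two is the flaw to avoid, and presenting them as distinct steps is what makes the proof clean.
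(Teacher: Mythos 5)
Your proof is correct and follows essentially the same route as the paper, which also treats this proposition as the $k=0$ specialisation of the general-layer proposition, with the trivial triplet $L_{0}=B_{0}=\{??\}$, $b_{??}\equiv 1$, $g_{0}\equiv 0$ validated by the previously stated identity $\mathbb{P}(A)=\mathbb{P}\left(A\mid ??\right)$. Your explicit separation of the definitional reduction from the verification that the trivial triplet is indeed a layer is a clean presentation of exactly what the paper leaves implicit.
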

\begin{definition}
Let $B$ be a situation. $B|_{i,j\rightarrow x}$ is the situation, that is given by the table $B$, in which the cell $b_{i,j}$ is replaced by $x$.
\end{definition}
\begin{definition}
Let $k:0\leqslant k< N$ and $S\in B_{k}$, then $S_{+} =S|_{z_{k+1}\rightarrow 1}$ and $S_{-} =S|_{z_{k+1}\rightarrow 0}$.
\end{definition}
\begin{proposition}
$S_{+} ,S_{-} \in B_{k+1}$.
\end{proposition}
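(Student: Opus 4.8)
The plan is to verify the claim directly from the definition of the support of a situation together with the definition of the sets $B_k$. First I would recall that a situation $S \in B_k$ is, by definition, a table over $\{0,1,?\}$ whose support equals exactly $\{z_1, z_2, \dots, z_k\}$; equivalently, the cells $z_1, \dots, z_k$ carry values in $\{0,1\}$, while every remaining cell carries the symbol $?$. In particular $z_{k+1}$ carries $?$ in $S$, and $z_{k+1}$ exists precisely because we are in the range $0 \leqslant k < N$.

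Next I would unfold the definition of $S_+ = S|_{z_{k+1}\rightarrow 1}$: this is the table that agrees with $S$ on every cell except $z_{k+1}$, whose value is set to $1$. Since $z_{k+1}$ held the symbol $?$ in $S$ and now holds $1 \in \{0,1\}$, while no other cell is altered, the support of $S_+$ is $\{z_1, \dots, z_k\} \cup \{z_{k+1}\} = \{z_1, \dots, z_{k+1}\}$. As $S_+$ is then a table over $\{0,1,?\}$ with support exactly $\{z_1, \dots, z_{k+1}\}$, the definition of $B_{k+1}$ yields $S_+ \in B_{k+1}$. The identical argument, replacing $1$ by $0$ throughout, gives $S_- \in B_{k+1}$.

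There is no genuine obstacle here: the statement is a bookkeeping consequence of the definitions, recorded so that the recursive passage from a layer $\mathcal{L}_k(A)$ to $\mathcal{L}_{k+1}(A)$ is well founded. The only point deserving a moment's care is confirming that $z_{k+1}$ truly carries $?$ in $S$ — which is guaranteed because the support of $S$ stops at $z_k$ — so that overwriting it with a Boolean value strictly enlarges the support by the single cell $z_{k+1}$ rather than leaving it unchanged.
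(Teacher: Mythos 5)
Your argument is correct and is precisely the definition-unfolding the paper leaves implicit (the paper states this proposition without proof, as an immediate consequence of the definitions of $B_k$ and $S|_{z_{k+1}\rightarrow x}$). Your one point of care — that $z_{k+1}$ carries $?$ in $S$ because the support of $S \in B_k$ is exactly $\{z_1,\dots,z_k\}$ and $k < N$ — is exactly the detail that makes the claim hold.
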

\begin{proposition}
Let $k:0\leqslant k< N$ and $S\in B_{k}$, then
$\mathbb{P}( A\mid S) =\mathbb{P}( A\mid S_{+}) p+\mathbb{P}( A\mid S_{-}) q$.
\end{proposition}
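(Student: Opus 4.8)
The plan is to recognize the identity as a single step of the law of total probability, splitting the conditioning event $S$ according to the value assigned to the next cell $z_{k+1}$. First I would note that, read as events, $S$ decomposes into the disjoint union $S = S_{+} \sqcup S_{-}$: every state belonging to $S$ assigns to $z_{k+1}$ either the value $1$ or the value $0$, and $S_{+}$, $S_{-}$ record exactly these two alternatives while keeping the other prescribed cells fixed. Hence $A \cap S = (A \cap S_{+}) \sqcup (A \cap S_{-})$, and additivity of $\mathbb{P}$ gives $\mathbb{P}(A \cap S) = \mathbb{P}(A \cap S_{+}) + \mathbb{P}(A \cap S_{-})$.

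The second ingredient is independence of the cells. Because $z_{k+1}$ does not lie in the support $\{z_{1},\dots,z_{k}\}$ of $S$ and the cells are i.i.d. Bernoulli, the event $\{z_{k+1}=1\}$ is independent of $S$, so that $\mathbb{P}(S_{+}) = p\,\mathbb{P}(S)$ and $\mathbb{P}(S_{-}) = q\,\mathbb{P}(S)$. Writing $\mathbb{P}(A \cap S_{\pm}) = \mathbb{P}(A \mid S_{\pm})\,\mathbb{P}(S_{\pm})$ and substituting these two relations turns the additivity identity into $\mathbb{P}(A \cap S) = \bigl(\mathbb{P}(A \mid S_{+})\,p + \mathbb{P}(A \mid S_{-})\,q\bigr)\,\mathbb{P}(S)$.

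Dividing by $\mathbb{P}(S)$ and using $\mathbb{P}(A \mid S) = \mathbb{P}(A \cap S)/\mathbb{P}(S)$ then yields the stated relation $\mathbb{P}(A \mid S) = \mathbb{P}(A \mid S_{+})\,p + \mathbb{P}(A \mid S_{-})\,q$. The only point demanding care is the degenerate case $\mathbb{P}(S)=0$, which happens exactly when $p \in \{0,1\}$ and $S$ prescribes a pattern of zero probability; there the pointwise conditional probabilities are undefined. I would dispose of this by proving the identity first for $p \in (0,1)$, where every situation has strictly positive probability (so that all three conditional probabilities are legitimately defined), and then observing that each conditional probability $\mathbb{P}(A \mid S)$ is itself the probability of an event on the unconditioned cells and hence, by the earlier Proposition applied to the reduced table, a polynomial in $p$. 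Since two polynomials agreeing on the interval $(0,1)$ coincide identically, the relation extends to all $p \in [0,1]$. This extension is the only subtlety; the core of the argument is just the bookkeeping of one conditioning step.
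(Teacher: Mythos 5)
Your proof is correct, and it is essentially the argument the paper intends: the paper states this proposition without proof, treating it as the elementary one-step conditioning identity, which is exactly what you supply --- the event decomposition $S = S_{+} \sqcup S_{-}$, independence of the cell $z_{k+1}$ from the support of $S$ giving $\mathbb{P}(S_{\pm}) = p\,\mathbb{P}(S)$, $q\,\mathbb{P}(S)$, and division by $\mathbb{P}(S)$. Your polynomial-continuation handling of the degenerate endpoints $p \in \{0,1\}$ is a careful extra that the paper silently glosses over (it works with the conditional probabilities as polynomial functions of $p$ throughout), and it is consistent with the paper's framework.
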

\begin{definition}
$L_{k}^{+} =\{S_{+} \mid S\in L_{k}\}$,
$L_{k}^{-} =\{S_{-} \mid S\in L_{k}\}$.
\end{definition}
\begin{definition}
Any situation $S$ is unambiguously comparable to its status in relation to the event$A$:
$\mathbb{P}( A\mid S) =1$ (situation $S$ has a positive status),
$\mathbb{P}( A\mid S) =0$ (situation $S$ has a negative status),
$\mathbb{P}( A\mid S) \notin \{0,1\}$ (situation $S$ has a neutral status).
\end{definition}

Let we have Algorithm~1, that, for any situation $S$, determines its status with respect to the event $A$.
\begin{remark}\label{rem:3}
When $S \in B_{N}$, Algorithm~1 gives positive, negative or neutral status with respect to the event $A$.
\end{remark}

Let we have Algorithm~2, which $\forall S_{1} ,S_{2} \in B_{k}$ makes a quick check for their equivalence and either confirms their equivalence ($S_{1} \sim S_{2}$) or provides no new information ($S_{1} \sim S_{2}$ or $S_{1} \nsim S_{2}$).
\begin{remark}
There can be many such algorithms (e.g., a trivial algorithm that never produces any new information). We will need Algorithm~1 to determine the equivalence of the situations, and can also do it in a sense `fast enough' (if the equivalence check takes too long, then we prefer to abandon it).
\end{remark}

\begin{proposition}
Using Algorithm~1 and Algorithm~2 represent the set  $L_{k}^{+} \sqcup L_{k}^{-}$ as
$$
L_{k}^{+} \sqcup L_{k}^{-} =\left( \bigsqcup_{B\in L_{k+1}} H_{B}\right) \sqcup M_{k}^{0} \sqcup M_{k}^{1},
$$
where the sets $H_{B}$ are such that $\forall S\in H_{B} :\mathbb{P}( A\mid S) \notin \{0,1\}$;
$B\in H_{B}$;
$\forall S \in H_{B} :S\sim B$,
$M_{k}^{1} =\left\{S\in L_{k}^{+} \cup L_{k}^{-} \mid \mathbb{P}( A\mid S) =1\right\}$,
$M_{k}^{0} =\left\{S\in L_{k}^{+} \cup L_{k}^{-} \mid \mathbb{P}( A\mid S) =0\right\}$.
\end{proposition}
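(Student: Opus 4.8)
The plan is to build the decomposition directly from the two algorithms, using the trichotomy of statuses as the organising principle. First I would recall that, by the proposition asserting $S_+,S_-\in B_{k+1}$ together with $L_k\subseteq B_k$, we have $L_k^+,L_k^-\subseteq B_{k+1}$; thus every $S\in L_k^+\cup L_k^-$ is a situation of support $\{z_1,\dots,z_{k+1}\}$ to which a status is assigned. By the definition of status, each such $S$ satisfies \emph{exactly one} of $\mathbb{P}(A\mid S)=1$, $\mathbb{P}(A\mid S)=0$, or $\mathbb{P}(A\mid S)\notin\{0,1\}$. Running Algorithm~1 on every element of $L_k^+\cup L_k^-$ therefore sorts this set into three mutually exclusive and exhaustive classes. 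Two of them are precisely the sets $M_k^1$ and $M_k^0$ named in the statement; the third is the set $\mathcal{N}$ of neutral situations, on which the remaining work is concentrated.

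Next I would handle $\mathcal{N}$ using Algorithm~2. The key observation is that $\sim$ is a genuine equivalence relation, since it is defined by the equality $\mathbb{P}(A\mid S_1)=\mathbb{P}(A\mid S_2)$ and equality of numbers is reflexive, symmetric and transitive. Whenever Algorithm~2 reports a pair as equivalent it reports a true instance of $\sim$; hence the binary relation it certifies is contained in $\sim$. I would take the reflexive-symmetric-transitive closure of this certified relation on $\mathcal{N}$, obtaining a partition of $\mathcal{N}$ into blocks (connected components). Because the certified relation sits inside $\sim$ and $\sim$ is transitive, any two situations in the same block are equivalent; choosing one representative $B$ per block, every $S$ in that block satisfies $S\sim B$. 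Collecting the representatives defines $L_{k+1}\subset B_{k+1}$, and setting $H_B$ to be the block of $B$ yields sets with $B\in H_B$, with all members neutral (as $\mathcal{N}$ consists of neutral situations), and with $S\sim B$ for every $S\in H_B$ — exactly the three properties demanded.

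It then remains to check that the union is disjoint and exhaustive. Note first that $L_k^+$ and $L_k^-$ are automatically disjoint, since every element of $L_k^+$ carries the value $1$ at cell $z_{k+1}$ while every element of $L_k^-$ carries $0$ there; hence $\sqcup$ on the left coincides with ordinary union. Exhaustiveness of the right-hand side is then immediate, because $M_k^1$, $M_k^0$ and $\mathcal{N}=\bigsqcup_{B\in L_{k+1}}H_B$ cover $L_k^+\cup L_k^-$ by the status trichotomy. Disjointness of $M_k^1$, $M_k^0$ and $\mathcal{N}$ follows from the uniqueness of each situation's status, and the individual $H_B$ are pairwise disjoint as distinct blocks of a partition.

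The step I expect to be the main obstacle is the careful treatment of the partiality of Algorithm~2. The statement deliberately does not assert that $H_B$ is a full $\sim$-equivalence class — only that every element of $H_B$ is equivalent to $B$ — so the proof must avoid over-claiming. The delicate point is exactly that Algorithm~2 may detect strictly fewer equivalences than actually hold under $\sim$, yet passing to the transitive closure keeps us inside $\sim$ precisely by transitivity, which is what licenses using $B$ as a valid proxy for its whole block. Everything else is bookkeeping around the status trichotomy.
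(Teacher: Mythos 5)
Your proof is correct, but it groups the neutral situations by a genuinely different mechanism than the paper does. The paper's ``proof'' of this proposition is the pseudocode immediately following it: the elements of $L_{k}^{+}\sqcup L_{k}^{-}$ are processed one at a time, and a neutral situation $S$ is placed into $H_{B}$ only when Algorithm~2 \emph{directly} certifies $S\sim B$ against a current representative $B\in L_{k+1}$; if no representative matches, $S$ founds its own class $H_{S}=\{S\}$ and is appended to $L_{k+1}$. Under that greedy scheme the property $\forall S\in H_{B}:S\sim B$ holds by construction, with one certified comparison per member, and the transitivity of $\sim$ --- the point you single out as the main obstacle --- is never needed. Your batch construction (certify pairs inside the neutral set $\mathcal{N}$, take the reflexive--symmetric--transitive closure, partition into blocks, pick a representative per block) is equally valid: since Algorithm~2 only reports true instances of $\sim$ and $\sim$ is an equivalence relation, the closure of the certified relation remains inside $\sim$, so every block member is equivalent to its representative, and the remaining clauses ($B\in H_{B}$, neutrality of members, disjointness and exhaustiveness via the status trichotomy, and the disjointness of $L_{k}^{+}$ and $L_{k}^{-}$ through the value at $z_{k+1}$, which the paper leaves implicit) all check out. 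The trade-off is worth noting: your closure can merge situations that Algorithm~2 certifies against each other but not against any common representative, so it may yield fewer, coarser classes (a smaller $L_{k+1}$), at the cost of up to all-pairs calls to Algorithm~2 and the extra transitivity argument; the paper's online pass costs one call per (situation, representative) pair and is the form actually implemented. Both constructions satisfy the proposition, whose conditions were deliberately stated weakly enough (each $H_{B}$ need not be a full $\sim$-equivalence class) to accommodate either.
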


\begin{algorithmic}[1]
\State $ M_{k}^{1} \gets \varnothing $
\State $ M_{k}^{0} \gets \varnothing $
\State $ L_{k+1} \gets \varnothing $
\For{ $ S\in L_{k}^{+} \sqcup L_{k}^{-}$}
\State	$ status \gets \Call{Algoritm1}{S}$
	\If{ $ status=1$}
\State 		add $ S$ to $ M_{k}^{1}$
	\ElsIf {$ status=-1$}
\State 		add $ S$ to $ M_{k}^{0}$
	\Else
\State 		$ flag \gets 0$
		\For{ $ B\in L_{k+1}$}
 			\If{ $ \Call{Algoritm2}{S,B} =1$ }
\State 				add $ S$ to $ H_{B}$
\State 				$ flag \gets 1$
\State 					break for
\EndIf
\EndFor
		\If{ $ flag=0$ }
\State 			$ H_{S} \gets \{S\}$
\State 			add $ S$ to $ L_{k+1}$
\EndIf
\EndIf
\EndFor

\end{algorithmic}

\begin{definition}
  $$
h_{S}( p,q) =
\left\{
\begin{array}{lcl}
  b_{S}( p,q) p, &  & S \in L_{k}^{+}; \\
  b_{S}( p,q) q, & & S \in L_{k}^{-}.
\end{array}
\right.
$$
$b_{B}( p,q) =\sum _{S\in H_{B}} h_{S}( p,q)$.
$g_{k+1}( p,q) =$$\sum _{S\in M_{k}^{1} } h_{S}( p,q) +g_{k}( p,q) ( p+q)$.
\end{definition}
Let $k:0\leqslant k< N$ and a layer is given $\mathcal{L}_{k}( A)$: $\mathbb{P}( A) =\sum _{S\in L_{k}}\mathbb{P}( A\mid S) b_{S}( p,q) +g_{k}( p,q)$.
Build up a layer $\mathcal{L}_{k+1}$ and thus simplify the problem of calculating $\mathbb{P}( A)$:
\begin{eqnarray*}
\fl \mathbb{P}( A) =\sum _{S\ \in \ L_{k}}\mathbb{P}( A\mid S)  b_{S}( p,q) +g_{k}( p,q) =\\
=\sum _{S\ \in \ L_{k}}(\mathbb{P}( A\mid S_{+})  p+\mathbb{P}( A\mid S_{-})  q)  b_{S}( p,q) +g_{k}( p,q)  ( p+q) =\\
=\sum _{S\ \in \ L_{k}}\mathbb{P}( A\mid S_{+})  p b_{S}( p,q) +\sum _{S\ \in \ L_{k}}\mathbb{P}( A\mid S_{-})  q b_{S}( p,q) +g_{k}( p,q)  ( p+q) =\\
=\sum _{S\ \in \ L_{k}^{+}}\mathbb{P}( A\mid S)  p b_{S}( p,q) +\sum _{S\ \in \ L_{k}^{-}}\mathbb{P}( A\mid S)  q b_{S}( p,q) +g_{k}( p,q)  ( p+q) =\\
=\sum _{S\ \in \ L_{k}^{+} \sqcup L_{k}^{-}}\mathbb{P}( A\mid S)  h_{S}( p,q) +g_{k}( p,q)  ( p+q) =\\
=\sum _{B\ \in \ L_{k+1}}\mathbb{P}( A\mid B)  \sum _{S\ \in \ H_{B}} h_{S}( p,q) + \sum _{S\ \in M_{k}^{1} \ } h_{S}( p,q) +g_{k}( p,q)  ( p+q)  =\\
=\sum _{B\in \ L_{k+1}}\mathbb{P}( A\mid B)  b_{B}( p,q) +g_{k+1}( p,q) .
\end{eqnarray*}
As the result, we get a layer $\mathcal{L}_{N}$: $\mathbb{P}( A) =\sum _{S\in L_{N}}\mathbb{P}( A\mid S) b_{S}( p,q) +g_{N}( p,q)$ (see Remark~\ref{rem:3}). In this way, we can compute  $\mathbb{P}( A)$.

In order to use this method in practice to calculate the explicit formula $\mathbb{P}( A)$ with respect to the parameter $p$, one needs to (i)~set $L_{1}, L_{2} \in \mathbb{N}^\ast$; (ii)~set a non-decreasing event $A$; (iii)~number all table cells $L_{1} \times L_{2}$ such that $z_{1}, z_{2}, \dots , z_{N}$; (iv)~put in Algorithm~1; (v)~put in Algorithm~2.

\section*{References}
\bibliographystyle{iopart-num.bst}

\bibliography{polynomials,scaling,newbib}

\end{document}